\newtheorem{theorem}{Theorem}
\newtheorem{corollary}{Corollary}[theorem]
\newtheorem{example}{Example}
\newcommand*\colvec[3][]{
    \begin{pmatrix}\ifx\relax#1\relax\else#1\\\fi#2\\#3\end{pmatrix}
}
\newcommand\wwide[1]{\ThisStyle{%
  \setbox0=\hbox{$\SavedStyle#1$}%
  \stackengine{-.1\LMpt}{$\SavedStyle#1$}{%
    \stretchto{\scaleto{\SavedStyle\mkern.2mu\AC}{.5150\wd0}}{.6\ht0}%
  }{O}{c}{F}{T}{S}%
}}
\newcommand{\ZZ}{{\mathbb Z}}
\newcommand{\mX}{{\mathcal{X}}}
\newcolumntype{C}{>{\centering\arraybackslash}p{1em}} 
\newcommand{\change}{\color{black}}
\begin{document}

\author{Zijian Liang}
\affiliation{International Center for Quantum Materials, School of Physics, Peking University, Beijing 100871, China}

\author{Ke Liu}
\affiliation{Hefei National Research Center for Physical Sciences at the Microscale and School of Physical Sciences, University of Science and Technology of China, Hefei 230026, China}
\affiliation{Shanghai Research Center for Quantum Science and CAS Center for Excellence in Quantum Information and Quantum Physics, University of Science and Technology of China, Shanghai 201315, China}

\author{Hao Song}
\affiliation{Institute of Theoretical Physics, Chinese Academy of Sciences, Beijing 100190, China}

\author{Yu-An Chen}
\email[E-mail: ]{yuanchen@pku.edu.cn}
\affiliation{International Center for Quantum Materials, School of Physics, Peking University, Beijing 100871, China}

\date{\today}
\title{Generalized toric codes on twisted tori for quantum error correction}

\begin{abstract}
The Kitaev toric code is widely considered one of the leading candidates for error correction in fault-tolerant quantum computation. However, direct methods to increase its logical dimensions, such as lattice surgery or introducing punctures, often incur prohibitive overheads. In this work, we introduce a ring-theoretic approach for efficiently analyzing topological CSS codes in two dimensions, enabling the exploration of generalized toric codes with larger logical dimensions on twisted tori.
Using Gr\"obner bases, we simplify stabilizer syndromes to efficiently identify anyon excitations and their geometric periodicities, even under twisted periodic boundary conditions. Since the properties of the codes are determined by the anyons, this approach allows us to directly compute the logical dimensions without constructing large parity-check matrices.
Our approach provides a unified method for finding new quantum error-correcting codes and exhibiting their underlying topological orders via the Laurent polynomial ring. This framework naturally applies to bivariate bicycle codes. For example, we construct optimal weight-6 generalized toric codes on twisted tori with parameters $[[ n, k, d ]]$ for $n \leq 400$, yielding novel codes such as $[[120,8,12]]$, $[[186,10,14]]$, $[[210,10,16]]$, $[[248, 10, 18]]$, $[[254, 14, 16]]$,  $[[294, 10, 20]]$, {\change $[[310, 10, \leq 22]]$}, and $[[340, 16, 18]]$.
Moreover, we present a new realization of the {\change$[[360, 12, \leq 24]]$} quantum code using the $(3,3)$-bivariate bicycle code on a twisted torus defined by the basis vectors $(0,30)$ and $(6,6)$, improving stabilizer locality relative to the previous construction.
These results highlight the power of the topological order perspective in advancing the design and theoretical understanding of quantum low-density parity-check (LDPC) codes.
\end{abstract}

\maketitle

\tableofcontents

\section{Introduction}

Quantum error correction is essential for scalable quantum computation~\cite{Shor1995Scheme, Steane1996Error, Knill1997Theory, gottesman1997stabilizer, kitaev2003fault}. Among the various quantum error-correcting codes developed, the Kitaev toric code is one of the most favorable candidates for practical implementation due to its high threshold~\cite{bravyi1998quantum, dennis2002topological, semeghini2021probing, Verresen2021PredictionTC, breuckmann2021quantum, bluvstein2022quantum, google2023suppressing, Google2023NonAbelian, Google2024surface, iqbal2023topological, iqbal2024NonAbelian, Cong2024EnhancingTO}.
Recently, bivariate bicycle (BB) codes have been shown to yield promising quantum error-correcting codes on small tori, in some cases performing up to an order of magnitude better than the Kitaev toric code~\cite{Bravyi2024HighThreshold, wang2024coprime, Wang2024Bivariate, tiew2024low, wolanski2024ambiguity, gong2024toward, maan2024machine, cowtan2024ssip, shaw2024lowering, cross2024linear, voss2024multivariate, berthusen2025toward, eberhardt2024logical, lin2025single}.
This progress is particularly exciting, as high-distance quantum low-density parity-check (LDPC) codes can exhibit a substantial reduction in the logical error rate once the physical error rate is below the threshold. Consequently, the ratio of physical to logical qubits can be significantly reduced while maintaining comparable error suppression~\cite{Bravyi2024HighThreshold}. These characteristics make these quantum LDPC codes appealing for near-term experimental implementations. As a result, there has been growing interest in developing efficient methods to analyze and characterize these codes.

Meanwhile, any two-dimensional translation-invariant Pauli stabilizer code over $\mathbb{Z}_2$ qubits satisfying the topological order condition~\cite{bravyi2010topological, bravyi2011short}
can be transformed by a finite-depth quantum circuit into a direct sum of the Kitaev toric codes and trivial stabilizers (product states)~\cite{bombin_Stabilizer_14, haah_module_13, haah2016algebraic, haah_classification_21, Chen2023Equivalence, ruba2024homological}.
Accordingly, concepts from topological order and topological quantum field theory (TQFT) \cite{dijkgraaf1990topological, Wen1993Topologicalorder, kitaev2006anyons, bombin2006topological, Levin2006Detecting, Chen2011Complete,  levin2012braiding, chen2012symmetry, Jiang2012Identifying, Cincio2013Characterizing, gu2014effect, gu2014lattice, jian2014layer, bombin2015gauge, wang2015topological, Ye2015Vortex, yoshida2016topological, ye2016topological, Kapustin2017Higher, Chen2018Exactbosonization, Lan2018Classification, wang2018towards, cheng2018loop, Chan2018Borromean, Chen2019Free, Han2019GeneralizedWenZee, Wang2019Anomalous, Chen2019Bosonization, Lan2019Classification, Chen2020Exactbosonization, wang2020construction, Chen2021Disentangling, Barkeshli2022Classification, Johnson2022Classification, ellison2022pauli, Chen2023HigherCup, chen2023exactly, maissam2023codimension, Kobayashi2024CrossCap, Maissam2024Higher, Maissam2024Highergroup, kobayashi2024universal, hsin2024classifying}—including anyons, fusion rules, topological spins, braiding statistics, partition functions (ground state degeneracy), and Wilson lines (logical operators)—can be directly applied.
Moreover, TQFTs, with their inherent robustness to local perturbations, naturally satisfy the quantum error-correcting criteria and can be treated as error-correcting codes.
These theoretical insights can enhance our understanding of bivariate bicycle codes and provide strategies for designing novel quantum error-correcting codes.

\begin{figure*}[t]
    \centering
    \hfill
    \includegraphics[scale=0.095]{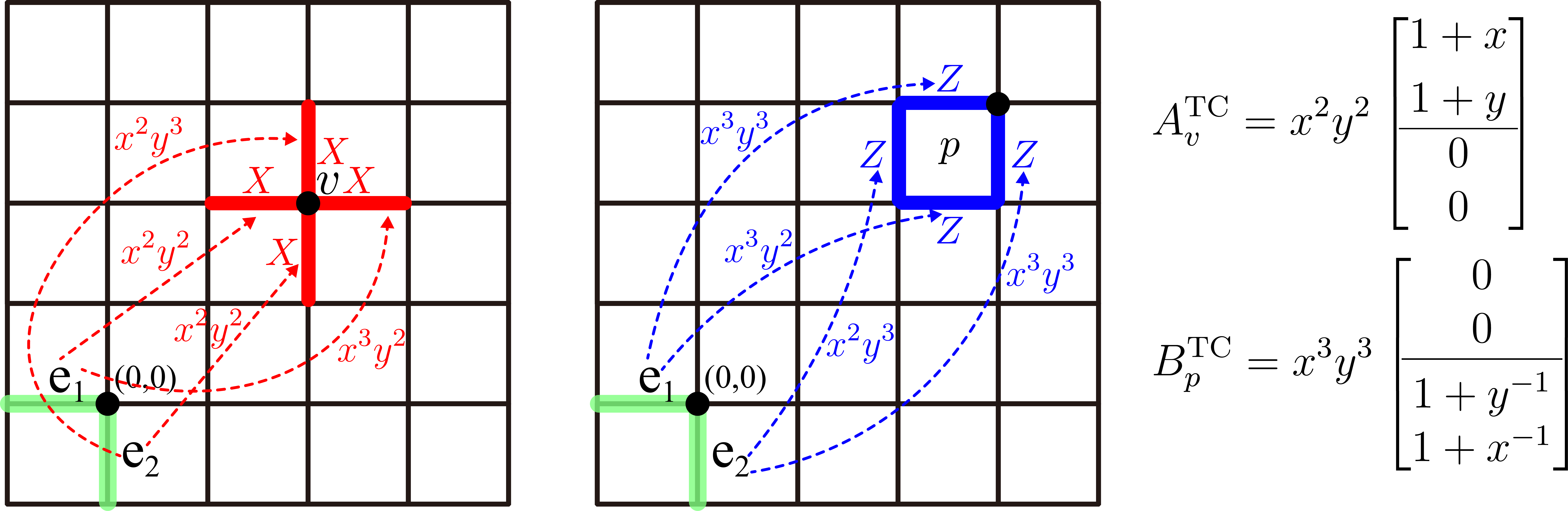}
    \hspace{1em}
    \caption{\change
    Polynomial representation of Pauli operators~\cite{haah2016algebraic}. We choose two edges $e_1$ and $e_2$ at the original as the unit cell and label their Pauli $X$ and $Z$ operators by 4-dimensional vectors, as in Eq.~\eqref{eq: X1 Z1 X2 Z2 definition}.
    The translation group $\mathbb{Z}^2$, generated by $x$ and $y$, acts by sending an operator at the origin to the site $(m,n)$ via multiplying its vector by the monomial $x^n y^m$.
    As an illustration, the $A^\mathrm{TC}_v$ and $B^\mathrm{TC}_p$ stabilizers of the Kitaev toric code are shown, with monomials such as $x^2y^2$ and $x^3y^3$ indicating their positions relative to the origin. In this way, all Pauli operators form a module over the Laurent polynomial ring $R=\mathbb{Z}_2[x,x^{-1},y,y^{-1}]$.
    }
    \label{fig: polynomial convention}
    \vspace{0.5em}
    \centering
    \includegraphics[scale=0.095]{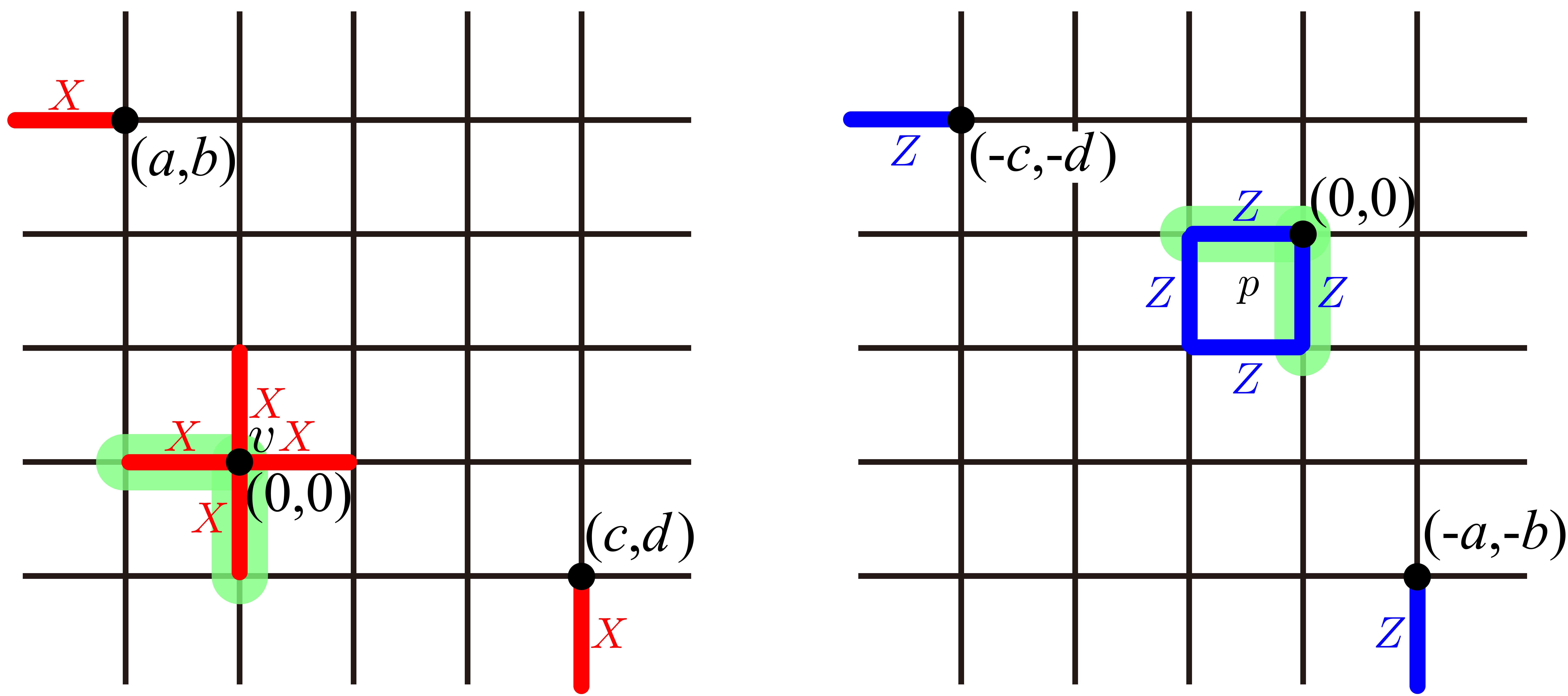}
    \caption{The $A_v$ and $B_p$ stabilizers of the generalized toric codes, parameterized by the polynomials $f(x,y) = 1 + x + x^a y^b$ and $g(x,y) = 1 + y + x^c y^d$ {\change in Eq.~\eqref{eq: stabilizer}}. The green-shaded region represents the unit cell at the origin used to generate the Pauli module over the Laurent polynomial ring~\cite{haah_module_13}. Stabilizers are specified by the integers $(a, b, c, d)$. Even when the stabilizers are identical, their implementation on different lattices yields various quantum LDPC codes.
    For instance, we later demonstrate that the $(-1,3,3,-1)$-generalized toric code (Example~\ref{example: -1 3 3 -1 generalized TC on infinite plane}), also known as the $(3,3)$-bivariate bicycle (BB) code~\cite{Bravyi2024HighThreshold, liang2024operator}, produces the $[[72, 8, 8]]$, $[[108, 8, 10]]$, $[[144, 12, 12]]$, $[[162, 8, 14]]$, $[[180, 8, 16]]$, $[[192, 8, 16]]$, $[[234, 8, 18]]$, $[[270, 8, 20]]$, {\change $[[282, 4, \leq24]]$, and $[[360, 12, \leq24]]$} quantum LDPC codes.
    }
    \label{fig: generalized_toric_code}
\end{figure*}

In this paper, using the framework of topological order, we develop a ring-theoretic approach to analyze the properties of two-dimensional topological CSS codes {\change in the Laurent polynomial formalism (Fig.~\ref{fig: polynomial convention})}. This method enables the efficient construction of new quantum LDPC codes, as summarized in Tables~\ref{tab: n_k_d 1}, \ref{tab: n_k_d 2}, \ref{tab: n_k_d 3}, and~\ref{tab: n_k_d 4}. We present the optimal $[[n, k, d]]$ with $n \leq 400$, for generalized toric codes (Fig.~\ref{fig: generalized_toric_code}) on twisted tori (Fig.~\ref{fig: twisted torus}).
For each $[[n, k, d]]$, there are typically multiple solutions for stabilizers and lattice configurations that can generate the same parameters. The polynomials and lattice vectors presented in these tables correspond to those with the most localized stabilizers, offering a more feasible construction.

We emphasize that the use of twisted tori facilitates the construction of stabilizers with more localized support compared to previous methods. For instance, the stabilizers in the $[[360, 12, \leq24]]$ code in Ref.~\cite{Bravyi2024HighThreshold} have a range of 9, as determined by the polynomial degrees. In contrast, the $[[360, 12, \leq24]]$ code presented in Table~\ref{tab: n_k_d 4} requires stabilizers with a reduced range of just 3, making it more practical for experimental implementation.

\section{Algebraic methods for error-correcting codes}

We adopt an algebraic approach to analyze quantum codes on lattices~\cite{haah2016algebraic}. By incorporating Laurent polynomial rings, we can extract the topological order associated with Pauli stabilizer codes~\cite{liang2023extracting}. Extending this framework, we introduce a ring-theoretic technique that simplifies computations for CSS codes. In particular, we employ Gr\"obner basis methods to systematically classify anyons in these topological orders, from which code properties naturally follow.  
The ground state degeneracy (GSD) on a torus is directly determined by the number of anyons. Specifically, the partition function of the (2+1)D TQFT satisfies~\cite{Witten1989Jones, Wen1995Topological, watanabe2023ground}:  
\begin{equation}
    Z(T^2 \times S^1) = \mathrm{GSD}_{T^2} = |\mathcal{A}|,
\label{eq: gsd and anyon number}
\end{equation}
where $\mathcal{A}$ denotes the corresponding anyon theory (unitary modular tensor category)~\cite{rowell2009classification, wang2010topological, Wang2022in, Barkeshli2022Classification, Plavnik2023Modular}. The logical operators of the code are realized as Wilson line operators (anyon string operators) wrapping around the noncontractible cycles of the torus.

We focus on the square lattice for simplicity.
Our goal is to analyze anyons in topological CSS codes, and we will demonstrate that the Gr\"obner basis technique provides an efficient way for computation.
{\change
First, we briefly review the polynomial representation of Pauli operators~\cite{haah_module_13, haah2016algebraic, liang2023extracting}, with a slightly modified convention. On the square lattice, we choose a unit cell (indicated in Fig.~\ref{fig: polynomial convention}) consisting of two edges and represent their Pauli operators as 4-dimensional vectors:
\begin{equation}
    \mathcal{X}_1 =
    \begin{bmatrix}
        1 \\
        0 \\
        \hline
        0 \\
        0
    \end{bmatrix},~
    \mathcal{Z}_1 =
    \begin{bmatrix}
        0 \\
        0 \\
        \hline
        1 \\
        0
    \end{bmatrix},~
    \mathcal{X}_2 =
    \begin{bmatrix}
        0 \\
        1 \\
        \hline
        0 \\
        0
    \end{bmatrix},~
    \mathcal{Z}_2 =
    \begin{bmatrix}
        0 \\
        0 \\
        \hline
        0 \\
        1
    \end{bmatrix}.
\label{eq: X1 Z1 X2 Z2 definition}
\end{equation}
Pauli operators on translated edges are obtained by multiplying these basis vectors by the monomial $x^n y^m$ ($n$ and $m$ could be negative), which implements a shift of $n$ steps in the $x$-direction and $m$ steps in the $y$-direction. In this language, the product of two Pauli operators corresponds simply to the sum of their four-dimensional vectors. Examples are shown in Fig.~\ref{fig: polynomial convention}. The vectors of Pauli operators form a module over the {\bf Laurent polynomial ring} $R= \mathbb{Z}_2[x, y, x^{-1}, y^{-1}]$.

}

\renewcommand{\arraystretch}{1.2}
\setlength{\tabcolsep}{0pt} 
\begin{table}[t]
\centering
\definecolor{mycolor1}{RGB}{255, 200, 100}  
\definecolor{mycolor2}{RGB}{200, 100, 200}
\definecolor{mycolor3}{RGB}{100, 180, 150}
\definecolor{mycolor4}{RGB}{100, 100, 150}
\definecolor{mycolor5}{RGB}{174, 217, 69}
\definecolor{mycolor6}{RGB}{250, 50, 200}
\definecolor{mycolor7}{RGB}{50, 250, 250}
\begin{tabular}{|c|c|c|c|c|c|}
\hline
$[[n,k,d]]$ & \makecell{$~f(x,y)=~$ \\ $1+x+...$}
& \makecell{$~g(x,y)=~$ \\ $1+y+...$} & $\vec{a}_1$   &$\vec{a}_2$   
&$\frac{kd^2}{n}$ 
\\ \hline

\rowcolor{cyan!16}$[[12,4,2]]$ & $xy$&$xy$ &
$(0,3)$&$(2,1)$   
&1.33 
\\ \hline

$[[14,6,2]]$ & $y$&$x$ &
$(0,7)$&$(1,2)$   
&1.71 
\\ \hline

\rowcolor{cyan!16}$[[18,4,4]]$ & $xy$&$xy$ &
$(0,3)$&$(3,0)$   
&\pmb{3.56} 
\\ \hline

\rowcolor{cyan!16}$[[24,4,4]]$ & $xy$&$xy$ &
$(0,3)$&$(4,2)$   
&2.67 
\\ \hline

$[[28,6,4]]$ & $x^{-1}y$&$xy$ &
$(0,7)$&$(2,3)$   
&3.43 
\\ \hline

\rowcolor{magenta!15}$[[30,4,6]]$ & $\pmb{x^2}$&$\pmb{x^2}$ &
$(0,3)$&$(5,1)$   
&\pmb{4.8} 
\\ \hline

$[[36,4,6]]$ & $x^{-1}$&$y^{-1}$ &
$(0,9)$&$(2,4)$   
&4.0 
\\ \hline

\rowcolor{gray!20}$[[42,6,6]]$ & $\pmb{xy}$&$\pmb{xy^{-1}}$ &
$(0,7)$&$(3,2)$   
&\pmb{5.14} 
\\ \hline

\rowcolor{magenta!15}$[[48,4,8]]$ & $\pmb{x^2}$&$\pmb{x^2}$ &
$(0,3)$&$(8,1)$   
&\pmb{5.33} 
\\ \hline

$[[54,8,6]]$ & $x^{-1}$&$x^3y^2$ &
$(0,3)$&$(9,0)$   
&~5.33~ 
\\ \hline

\rowcolor{mycolor1!80}$[[56,6,8]]$ & $\pmb{y^{-2}}$&$\pmb{x^{-2}}$ &
$(0,7)$&$(4,3)$   
&\pmb{6.86}
\\ \hline

$[[60,8,6]]$ & $y^{-2}$&$x^2$ &
$(0,10)$&$(3,3)$   
&4.8 
\\ \hline

$[[\pmb{62,10,6}]]$ &$x^{-1}y$&$x^{-1}y^{-1}$&
$(0,31)$&~$(1,13)$~  
&5.81 
\\ \hline

\rowcolor{teal!40}~$[[66,4,10]]$~ & $x^{-2}y^{-1}$&$x^2y$ &
$(0,3)$&$(11,2)$   
&6.06 
\\ \hline

\rowcolor{gray!20}$[[70,6,8]]$ & $xy$&$xy^{-1}$ &
$(0,7)$&$(5,1)$   
&5.49 
\\ \hline

\rowcolor{mycolor7!70}$[[72,8,8]]$ & $\pmb{x^{-1}y^3}$&$\pmb{x^3y^{-1}}$ &
$(0,12)$&$(3,3)$   
&\pmb{7.11} 
\\ \hline

\rowcolor{teal!40}$[[\pmb{78,4,10}]]$ & $x^{-2}y^{-1}$&$x^2y$ &
$(0,3)$&$(13,1)$   
&5.13 
\\ \hline

\rowcolor{yellow!30}$[[84,6,10]]$ & $\pmb{x^{-2}}$&$\pmb{x^{-2}y^2}$ &
~$(0,14)$~ &~$(3,-6)$~   
&\pmb{7.14} 
\\ \hline

\rowcolor{green!40}$[[90,8,10]]$ & $\pmb{x^{-1}y^{-3}}$&$\pmb{x^3y^{-1}}$ &
$(0,15)$&$(3,-6)$   
&\pmb{8.89} 
\\ \hline

\rowcolor{brown!40}$[[96,4,12]]$ & $x^{-2}y$&$xy^{-2}$ &
$(0,12)$&$(4,2)$   
&6 
\\ \hline

\rowcolor{red!35}$[[98,6,12]]$ & $\pmb{x^{-1}y^2}$&$\pmb{x^{-2}y^{-1}}$ &
$(0,7)$&$(7,0)$   
&8.82 
\\ \hline

\rowcolor{mycolor2!80}$[[102,4,12]]$ & $x^{-3}y$&$x^3y^2$ &
$(0,3)$&$(17,2)$   
&5.65 
\\ \hline

\rowcolor{green!40}$[[108,8,10]]$ & $\pmb{x^{-1}y^{-3}}$&$\pmb{x^3y^{-1}}$ &
$(0,9)$&$(6,0)$   
&7.41 
\\ \hline

\rowcolor{mycolor7!70} ~$[[108,8,10]]$~ & $\pmb{x^{-1}y^3}$&$\pmb{x^3y^{-1}}$ &
$(0,9)$&$(6,0)$   
&7.41 
\\ \hline

\end{tabular}
\caption{Optimal weight-6 generalized toric codes $[[n,k,d]]$ with $n \leq 110$. The stabilizer code is defined by $f(x,y) = 1 + x + x^a y^b$ and $g(x,y) = 1 + y + x^c y^d$, as depicted in Fig.~\ref{fig: generalized_toric_code}. The second and third columns correspond to the terms $x^a y^b$ and $x^c y^d$, respectively. The basis vectors $\vec{a}_1$ and $\vec{a}_2$ define the twisted torus, illustrated in Fig.~\ref{fig: twisted torus}. Rows with the same color share identical stabilizers, i.e., the same polynomials $f(x,y)$ and $g(x,y)$, but implemented on different lattices. Bold $[[\pmb{n, k, d}]]$ denote newly discovered codes in this work, while bold polynomials highlight the novel stabilizers we found, which are more localized compared to previous constructions.
{\change Code distances are computed exactly from the integer programming approach~\cite{landahl2011fault, Bravyi2024HighThreshold}.}
All results were obtained using a personal computer.  
}
\label{tab: n_k_d 1}
\end{table}

\renewcommand{\arraystretch}{1.2}
\setlength{\tabcolsep}{0pt} 
\begin{table}[t]
\centering
\definecolor{mycolor1}{RGB}{255, 200, 100}  
\definecolor{mycolor2}{RGB}{200, 100, 200}
\definecolor{mycolor3}{RGB}{100, 180, 150}
\definecolor{mycolor4}{RGB}{100, 100, 150}
\definecolor{mycolor5}{RGB}{174, 217, 69}
\definecolor{mycolor6}{RGB}{250, 50, 200}
\definecolor{mycolor7}{RGB}{50, 250, 250}
\definecolor{mycolor8}{RGB}{250, 250, 50}
\begin{tabular}{|c|c|c|c|c|c|}
\hline
$[[n,k,d]]$ & \makecell{$~f(x,y)=~$ \\ $1+x+...$}
& \makecell{$~g(x,y)=~$ \\ $1+y+...$} & $\vec{a}_1$   &$\vec{a}_2$   
&$\frac{kd^2}{n}$ 
\\ \hline

\rowcolor{red!35}$[[112,6,12]]$ & $x^{-1}y^2$&$x^{-2}y^{-1}$ &
$(0,7)$&$(8,2)$   
&7.71 
\\ \hline

$[[114,4,14]]$ & $x^{-3}y$&$x^{-5}$ &
$(0,3)$&$(19,1)$   
&6.88 
\\ \hline

\rowcolor{blue!30}$[[\pmb{120,8,12}]]$ & $x^{-2}y$&$xy^2$ &
$(0,10)$&$(6,4)$   
&\pmb{9.6} 
\\ \hline

\rowcolor{red!35}$~[[\pmb{124,10,10}]]~$ & $x^{-1}y^2$&$x^{-2}y^{-1}$ &
$~(0,31)~$&$~(2,-12)~$   
&8.06 
\\ \hline

$[[126,12,10]]$ & $\pmb{x^{-1}y^{-2}}$&$\pmb{xy^{-1}}$ &
$(0,9)$&$(7,3)$   
&9.52 
\\ \hline

\rowcolor{mycolor1!80}$[[132,4,14]]$ & $y^{-2}$&$x^{-2}$ &
$(0,33)$&$(2,-7)$   
&5.94 
\\ \hline

\rowcolor{mycolor2!80}$[[\pmb{138,4,14}]]$ & $x^{-3}y$&$x^3y^2$ &
$(0,3)$&$(23,2)$   
&5.68 
\\ \hline

\rowcolor{yellow!30}$[[\pmb{140,6,14}]]$ & $x^{-2}$&$x^{-2}y^2$ &
$(0,7)$&$(10,1)$   
&8.4 
\\ \hline

\rowcolor{green!40}$[[144,12,12]]$ & $\pmb{x^{-1}y^{-3}}$&$\pmb{x^3y^{-1}}$ &
$(0,12)$&$(6,0)$   
&\pmb{12} 
\\ \hline

\rowcolor{mycolor7!70}$[[144,12,12]]$ & $x^{-1}y^3$&$x^3y^{-1}$ &
$(0,12)$&$(6,0)$   
&\pmb{12} 
\\ \hline

$[[\pmb{146,18,4}]]$ &$y^2$&$x^{-4}y$&
$(0,73)$&$(1,16)$ 
&1.97
\\ \hline

\rowcolor{blue!30}$[[\pmb{150,8,12}]]$ & $x^{-2}y$&$xy^2$ &
$(0,25)$&$(3,7)$   
&7.68 
\\ \hline

$[[{154,6,16}]]$ & $\pmb{x^{-1}y^2}$&$\pmb{y^{-4}}$ &
$(0,77)$&$(1,16)$   
&9.97
\\ \hline

\rowcolor{brown!40}$[[\pmb{156,4,16}]]$ & $x^{-2}y$&$xy^{-2}$ &
$(0,39)$&$(2,-11)$   
&6.56 
\\ \hline

\rowcolor{green!40}$[[162,8,14]]$ & $\pmb{x^{-1}y^{-3}}$&$\pmb{x^3y^{-1}}$ &
$(0,9)$&$(9,-3)$   
&9.68 
\\ \hline

\rowcolor{mycolor7!70}$[[162,8,14]]$ & $\pmb{x^{-1}y^3}$&$\pmb{x^3y^{-1}}$ &
$(0,9)$&$(9,-3)$   
& ~9.68~ 
\\ \hline

\rowcolor{green!40}$[[\pmb{168,8,14}]]$ & $x^{-1}y^{-3}$&$x^3y^{-1}$ &
$(0,42)$&$(2,-16)$   
&9.33 
\\ \hline

\rowcolor{mycolor4!50}$[[\pmb{170,16,10}]]$ & $y^{-4}$&$x^4$ &
$(0,17)$&$(5,-7)$   
&9.41 
\\ \hline

$[[\pmb{174,4,18}]]$ & $x^{-8}y$&$x^6y^2$ &
$(0,3)$&$(29,1)$   
&7.45 
\\ \hline

\rowcolor{green!40}$[[180,8,16]]$ & $\pmb{x^{-1}y^{-3}}$&$\pmb{x^3y^{-1}}$ &
$(0,15)$&$(6,6)$   
&~11.38~ 
\\ \hline

\rowcolor{mycolor7!70}$[[180,8,16]]$ & $\pmb{x^{-1}y^3}$&$\pmb{x^3y^{-1}}$ &
$(0,15)$&$(6,3)$   
&~11.38~ 
\\ \hline

$[[\pmb{182,6,18}]]$ & $x^2y^3$&$x^4y$ &
$(0,7)$&$(13,1)$   
&10.68 
\\ \hline

\rowcolor{mycolor8!80}$[[\pmb{186,10,14}]]$ & $x^2y^3$&$x^2y^{-2}$ &
$(0,31)$&$(3,7)$   
&10.54
\\ \hline

\rowcolor{mycolor7!70}$[[\pmb{192,8,16}]]$ & $x^{-1}y^3$&$x^3y^{-1}$ &
$(0,12)$&$(8,2)$   &10.67 \\ \hline

\rowcolor{red!35}$[[\pmb{196,6,18}]]$ & $x^{-1}y^2$&$x^{-2}y^{-1}$ &
$~(0,49)~$&$~(2,-10)~$   &9.92 \\ \hline

\end{tabular}
\caption{Continuation of Table~\ref{tab: n_k_d 1} for $110 < n \leq 196$.}
\label{tab: n_k_d 2}
\end{table}

\renewcommand{\arraystretch}{1.2}
\setlength{\tabcolsep}{0pt} 
\begin{table}[t]
\centering
\definecolor{mycolor1}{RGB}{255, 200, 100}  
\definecolor{mycolor2}{RGB}{200, 100, 200}
\definecolor{mycolor3}{RGB}{100, 180, 150}
\definecolor{mycolor4}{RGB}{100, 100, 150}
\definecolor{mycolor5}{RGB}{174, 217, 69}
\definecolor{mycolor6}{RGB}{250, 50, 200}
\definecolor{mycolor7}{RGB}{50, 250, 250}
\begin{tabular}{|c|c|c|c|c|c|}
\hline
$[[n,k,d]]$ & \makecell{$~f(x,y)=~$ \\ $1+x+...$}
& \makecell{$~g(x,y)=~$ \\ $1+y+...$} & $\vec{a}_1$   &$\vec{a}_2$   
&$\frac{kd^2}{n}$ 
\\ \hline

\rowcolor{mycolor5!40}$[[\pmb{198,8,16}]]$ & $x^{-4}$&$x^{-3}y^2$ &
$(0,33)$&$(3,9)$   &10.34 \\ \hline

$[[\pmb{204,4,20}]]$ & $x^{-3}y$&$x^{-1}y^{-2}$ &
$(0,51)$&$(2,14)$   &7.84 \\ \hline

\rowcolor{mycolor6!40} ~$[[\pmb{210,10,16}]]$~ & $x^{-3}y^2$&$x^{-3}y^{-1}$ &
$(0,21)$&$(5,10)$   &\pmb{12.19} \\ \hline

$[[\pmb{216,8,18}]]$ & $x^{-2}y^{-5}$&$x^{-1}y^{-3}$ &
$(0,54)$&$(2,16)$   &12 \\ \hline

$[[\pmb{222,4,20}]]$ & $x^{-6}y^{-1}$&$x^5$ &
$(0,3)$&$(37,2)$   &7.21 \\ \hline

\rowcolor{mycolor6!40}$[[\pmb{224,6,20}]]$ & $x^{-3}y^2$&$x^{-3}y^{-1}$ &
$(0,28)$&$(4,-6)$   &10.71 \\ \hline

\rowcolor{brown!40}$[[\pmb{228,4,20}]]$ & $x^{-2}y$&$xy^{-2}$ &
$(0,57)$&$(2,10)$   &7.02 \\ \hline

\rowcolor{green!40}$[[\pmb{234,8,18}]]$ & $x^{-1}y^{-3}$&$x^3y^{-1}$ &
$(0,39)$&$(3,-9)$   &11.08 \\ \hline

\rowcolor{mycolor7!70}$[[\pmb{234,8,18}]]$ & $x^{-1}y^3$&$x^3y^{-1}$ &
$(0,39)$&$(3,6)$   &11.08 \\ \hline

\rowcolor{mycolor5!40}$[[\pmb{238,6,20}]]$ & $x^{-4}$&$x^{-3}y^2$ &
$(0,7)$&$(17,1)$   &10.08 \\ \hline

\rowcolor{blue!30}$[[\pmb{240,8,18}]]$ & $x^{-2}y$&$xy^2$ &
$(0,10)$&$(12,3)$   &10.8 \\ \hline

$[[\pmb{246,4,\leq22}]]$ & $x^3y$&$x^2y^{-2}$ &
~$(0,123)$~ &$(1,22)$   &7.87 \\ \hline

$[[\pmb{248,10,18}]]$ & $x^{-2}y$&$x^{-3}y^{-2}$ &
$(0,62)$&$(2,25)$   &\pmb{13.06} \\ \hline

$[[\pmb{252,12,16}]]$ & $x^{-3}y^{-1}$&$x^2y^{-2}$ &
$(0,18)$&$(7,7)$   &12.19 \\ \hline

$[[\pmb{254,14,16}]]$ & $x^{-1}y^{-3}$&$y^{-6}$ &
$(0,127)$&$(1,25)$   &\pmb{14.11} \\ \hline

$[[\pmb{258,4,\leq22}]]$ & $x^{-8}y^{-1}$&$x^5y$ &
$(0,3)$&$(43,1)$   &7.50 \\ \hline

$[[\pmb{264,8,20}]]$ & $xy^{-5}$&$xy^4$ &
$(0,66)$&$(2,28)$   &12.12 \\ \hline

$[[\pmb{266,6,\leq22}]]$ & $x^{-1}y^{-1}$&$x^5$ &
$(0,7)$&$(19,2)$   &10.92 \\ \hline

\rowcolor{green!40}$[[\pmb{270,8,20}]]$ & $x^{-1}y^{-3}$&$x^3y^{-1}$ &
$(0,15)$&$(9,6)$   &11.85 \\ \hline

\rowcolor{mycolor7!70}$[[\pmb{270,8,20}]]$ & $x^{-1}y^3$&$x^3y^{-1}$ &
$(0,45)$& ~$(3,-12)$~   &11.85 \\ \hline

\rowcolor{mycolor2!80}$[[\pmb{276,4,\leq24}]]$ & $x^{-3}y$&$x^3y^2$&
$(0,6)$&$(23,5)$   &8.35 \\ \hline

~$[[\pmb{280,6,\leq22}]]$~ & $xy^3$&$x^2y^{-2}$ &
$(0,28)$&$(5,12)$   & ~10.37~ \\ \hline

\rowcolor{mycolor7!70}$[[\pmb{282,4,\leq24}]]$ & $x^{-1}y^3$&$x^3y^{-1}$ &
$(0,141)$&$(1,7)$   &8.17 \\ \hline

\rowcolor{mycolor7!70}$[[\pmb{288,16,12}]]$ & $x^{-1}y^3$&$x^3y^{-1}$ &
$(0,12)$&$(12,0)$   &8 \\ \hline

\rowcolor{green!40} $~[[288,12,18]]~$ & $\pmb{x^{-1}y^{-3}}$&$\pmb{x^3y^{-1}}$ &
$(0,12)$&$(12,0)$   &\pmb{13.5} \\ \hline

$[[\pmb{292,18,8}]]$ & $y^2$&$x^{-4}y$ &
$(0,73)$&$(2,32)$   &3.95 \\ \hline

\end{tabular}
\caption{ Continuation of Table~\ref{tab: n_k_d 2} for $196 < n \leq 292$.
{ \change For $d>20$, the integer‐programming method fails to terminate in a reasonable time.
Accordingly, we use the probabilistic algorithm of Ref.~\cite{Pryadko2022GAP}, using between 5,000 and 10,000 information sets to estimate an upper bound on $d$.
Each computation is repeated several hundred times to ensure consistency, yielding a bound that we consider tight.
}
}
\label{tab: n_k_d 3}
\end{table}

\renewcommand{\arraystretch}{1.2}
\setlength{\tabcolsep}{0pt} 
\begin{table}[t]
\centering
\definecolor{mycolor1}{RGB}{255, 200, 100}  
\definecolor{mycolor2}{RGB}{200, 100, 200}
\definecolor{mycolor3}{RGB}{100, 180, 150}
\definecolor{mycolor4}{RGB}{100, 100, 150}
\definecolor{mycolor5}{RGB}{174, 217, 69}
\definecolor{mycolor6}{RGB}{250, 50, 200}
\definecolor{mycolor7}{RGB}{50, 250, 250}
\definecolor{mycolor8}{RGB}{250, 250, 50}
\begin{tabular}{|c|c|c|c|c|c|}
\hline
$[[n,k,d]]$ & \makecell{$~f(x,y)=~$ \\ $1+x+...$}
& \makecell{$~g(x,y)=~$ \\ $1+y+...$} & $\vec{a}_1$   &$\vec{a}_2$   
&$\frac{kd^2}{n}$ 
\\ \hline

$[[\pmb{294,10,20}]]$ & $x^{-3}y$&$xy^{-3}$ &
$(0,21)$&$(7,7)$   &\pmb{13.61} \\ \hline

$[[\pmb{300,8,\leq22}]]$ & $x^{-1}y^{-4}$&$x^{-3}y^3$ &
$(0,75)$&$(2,26)$   &~12.91~ \\ \hline

\rowcolor{green!40} $[[\pmb{306,8,\leq22}]]$ &$x^{-1}y^{-3}$&$x^3y^{-1}$&
$(0,51)$&$(3,21)$   &12.65 \\ \hline

$[[\pmb{308,6,\leq24}]]$ & $x^{-1}y^{-2}$&$x^2y^{-1}$ &
$(0,77)$& ~$(2,-13)$~ & ~11.22~ \\ \hline

$[[\pmb{310,10,\leq22}]]$ & $x^3y^2$&$x^{-4}y^4$ &
$(0,31)$&$(5,11)$   &\pmb{15.61} \\ \hline

$[[\pmb{312,8,\leq22}]]$ & $x^{-1}y^3$&$xy^3$ &
$(0,78)$&$(2,-16)$   &~12.41~ \\ \hline

$[[\pmb{318,4,\leq26}]]$ & $x^3y^{-4}$&$x^{-1}y^{-3}$ &
~$(0,159)$~ &$(1,17)$   &8.50 \\ \hline

$[[\pmb{322,6,\leq24}]]$ & $x^{-3}y^2$&$x^{-4}y^{-1}$ &
$(0,7)$&$(23,3)$   &10.73 \\ \hline

\rowcolor{green!40}$[[\pmb{324,8,\leq22}]]$  &$x^{-1}y^{-3}$&$x^3y^{-1}$&
$(0,18)$&$(9,6)$   &11.95\\ \hline

$[[\pmb{330,8,\leq24}]]$  &$x^{-6}y^2$&$x^2y^5$&
$(0,55)$&$(3,23)$   &13.96\\ \hline

$[[\pmb{336,10,\leq22}]]$  &$x^{-4}$&$x^{-1}y^{-3}$&
$(0,84)$&$(2,37)$   &14.40\\ \hline

\rowcolor{mycolor4!50}$[[\pmb{340,16,18}]]$  &$y^{-4}$&$x^4$&
$(0,34)$&$(5,-7)$   &\pmb{15.25}\\ \hline

\rowcolor{green!40}$[[\pmb{342,8,\leq22}]]$  &$x^{-1}y^{-3}$&$x^3y^{-1}$&
$(0,57)$&$(3,15)$   &11.32\\ \hline

\rowcolor{mycolor3!30}$[[\pmb{348,4,\leq26}]]$  &$x^{-2}y^2$&$x^{-1}y^{-2}$&
$(0,87)$&$(2,14)$   &7.77 \\ \hline

$[[\pmb{350,6,\leq26}]]$  &$x^2y^2$&$x^{-4}y$&
$(0,35)$&$(5,13)$   &11.58\\ \hline

\rowcolor{mycolor3!30}$[[\pmb{354,4,\leq28}]]$  &$x^{-2}y^2$&$x^{-1}y^{-2}$&
$(0,177)$&$(1,-53)$   &8.86\\ \hline

\rowcolor{mycolor7!70}~$[[360,12,\leq24]]$~  &$\pmb{x^{-1}y^3}$&$\pmb{x^3y^{-1}}$&
$(0,30)$&$(6,6)$   &\pmb{19.2}\\ \hline

$[[\pmb{364,6,\leq26}]]$  &$x^{-1}y^3$&$x^3$&
$(0,14)$&$(13,4)$   &11.14\\ \hline

\rowcolor{mycolor8!80}$[[\pmb{366,4,\leq28}]]$  &$x^2y^3$&$x^2y^{-2}$&
$(0,183)$&$(1,76)$   &8.57\\ \hline

$[[\pmb{372,10,\leq24}]]$  &$x^{-3}y^{-2}$&$x^{-1}y^{-3}$&
$(0,93)$&$(2,-16)$   &15.48\\ \hline

$[[\pmb{378,12,\leq22}]]$  &$x^3y^{-3}$&$x^4$&
$(0,21)$&$(9,6)$   &15.37\\ \hline

$[[\pmb{384,12,\leq24}]]$  &$x^{-4}y^{-3}$&$x^3y^{-1}$&
$(0,48)$&$(4,20)$   &18\\ \hline

$[[\pmb{390,8,\leq26}]]$  &$x^{-2}y^3$&$x^2y^3$&
$(0,15)$&$(13,1)$   &13.87\\ \hline

\rowcolor{mycolor6!40}$[[\pmb{392,6,\leq28}]]$  &$x^{-3}y^2$&$x^{-3}y^{-1}$&
$(0,28)$&$(7,7)$   &12\\ \hline

\rowcolor{green!40}$[[\pmb{396 ,8, \leq26}]]$  & $x^{-1} y^{-3}$ & $x^{3} y^{-1}$ &
$(0,66)$&$(3,18)$   & 13.66 \\ \hline

\end{tabular}
\caption{Continuation of Table~\ref{tab: n_k_d 3} for $292 < n \leq 400$.}
\label{tab: n_k_d 4}
\end{table}

We consider a translation-invariant $\mathbb{Z}_2$ CSS code whose stabilizers are expressed:
\begin{equation}
    A_v = 
    \begin{bmatrix}
        f(x,y) \\
        \rule{0pt}{1.1em}g(x,y) \\
        \hline
        0 \\
        0
    \end{bmatrix}, 
    \quad
    B_p = 
    \begin{bmatrix}
        0 \\
        0 \\
        \hline
        \rule{0pt}{1.1em}\overline{g(x,y)} \\
        \overline{f(x,y)}
    \end{bmatrix},
    \label{eq: stabilizer}
\end{equation}
{\change with $f(x,y), ~g(x,y) \in R$and $\overline{(\cdots)}$ denoting the {\bf antipode map}, a linear involution on $R$:
\begin{equation}
    x^n y^m \rightarrow \overline{x^n y^m}:= x^{-n} y^{-m}.
\end{equation}
$A_v$ and $B_p$ denote the $X$ and $Z$ stabilizer generators, respectively, forming the stabilizer group $\mathcal{S}$.\footnote{More precisely, $\mathcal{S}$ is generated by all lattice translations of $A_v$ and $B_p$, i.e.\ the set of operators
\[
\{~x^n y^m A_v,\; x^n y^m B_p \mid n,m\in\mathbb{Z}\}.
\]}
A simple example is the Kitaev toric code~\cite{bravyi1998quantum}, for which $f(x,y)=1+x$ and $g(x,y)=1+y$, and whose stabilizers are illustrated in Fig.~\ref{fig: polynomial convention}.
}
Next, we further introduce the {\bf generalized toric code}, a specific type of the bivariate bicycle code, defined by
\begin{eqs}
    f(x,y) &= 1 + x + x^a y^b, \\
    g(x,y) &= 1 + y + x^c y^d,
\label{eq: a b c d generalized toric code}
\end{eqs}
as illustrated in Fig.~\ref{fig: generalized_toric_code}. We refer to the stabilizer given by Eq.~\eqref{eq: a b c d generalized toric code} as the $(a,b,c,d)$-generalized toric code.

Next, we determine the {\bf excitation map} (error syndromes) for the single Pauli error at edges in the unit cell:
\begin{equation}
    \begin{aligned}
        \epsilon(\mathcal{X}_1) &= [A_v \cdot \mathcal{X}_{1} ~,~ B_p \cdot \mathcal{X}_{1}] 
        = [0,\; g(x,y)], \\
        \epsilon(\mathcal{X}_2) &= [A_v \cdot \mathcal{X}_{2} ~,~ B_p \cdot \mathcal{X}_{2}] 
        = [0,\; f(x,y)], \\
        \epsilon(\mathcal{Z}_1) &= [A_v \cdot \mathcal{Z}_{1} ~,~ B_p \cdot \mathcal{Z}_{1}] 
        = [\overline{f(x,y)},\; 0], \\
        \epsilon(\mathcal{Z}_2) &= [A_v \cdot \mathcal{Z}_{2} ~,~ B_p \cdot \mathcal{Z}_{2}] 
        = [\overline{g(x,y)},\; 0],
    \end{aligned}
    \label{eq: get_error_syndromes}
\end{equation}
{\change
where $\cdot$ is the symplectic product defined by
\begin{eqs}
    v_1 \cdot v_2 = \overline{v}_1^{T} \Lambda v_2,
\end{eqs}
where $T$ denotes matrix transpose and
\begin{eqs}
    \Lambda=
    \left[\begin{array}{cc | cc}
        0 & 0 & 1 & 0 \\
        0 & 0 & 0 & 1 \\
        \hline
        -1 & 0 & 0 & 0 \\
        0 & -1 & 0 & 0 \\
    \end{array}\right]
\end{eqs}
is the standard symplectic form in this basis.\footnote{Since in $\mathbb{Z}_2$, $-1$ is identified wtih $+1$, the minus sign is immaterial; it is included solely to preserve consistency with the $\mathbb{Z}_d$ qudit generalization.}
The symplectic product captures the commutation relations: two Pauli operators anticommute precisely when their symplectic product has the nonzero constant term.
See Ref.~\cite{liang2023extracting} for further details.
}

To verify that the stabilizer Hamiltonian satisfies the topological order (TO) condition, we check that any local operator commuting with the stabilizers can be expressed as a finite product of stabilizers~\cite{haah_module_13}:
\begin{equation}
    \ker \epsilon = \mathcal{S}.
\end{equation}
In reference to the polynomials $f(x, y)$ and $g(x, y)$, the condition can be reformulated as~\cite{eberhardt2024logical}:
\begin{equation}
    \langle f(x, y) \rangle \cap \langle g(x, y) \rangle = \langle f(x, y) g(x, y) \rangle,
\label{eq: TO condition for f and g}
\end{equation}
where $\langle p(x, y) \rangle$ denotes the ideal in $R$ generated by the polynomial $p(x, y)$. This implies that the polynomials $f(x, y)$ and $g(x, y)$ are coprime.
The algorithm for verifying whether $A_v$ and $B_p$ in Eq.~\eqref{eq: stabilizer} satisfy the TO condition is provided in Ref.~\cite{liang2023extracting}.

\subsection{Classification of anyons on an infinite plane}
Anyons are defined as violations of stabilizers, i.e., the mapping \cite{liang2024operator}:
\begin{equation}
    \phi:\mathcal{S} \rightarrow\ZZ_2.
\label{eq: stabilizer violation}
\end{equation}
We first focus on $m$-type anyons, which correspond to violations of $B_p$ caused by Pauli $X$ operators. These anyons take the form
\begin{equation}
    v_m = [0,a(x,y)], \quad a(x,y) \in R,
\end{equation}
where $a(x,y)$ records the location where $B_p$ is violated.
From Eq.~\eqref{eq: get_error_syndromes}, both $[0, g(x,y)]$ and $[0,f(x,y)]$ represent trivial anyons since local operators create them. Thus, the nontrivial $m$-type anyons are classified by the quotient ring:
\begin{equation}
    \text{$m$-type anyons} = \frac{\mathbb{Z}_2[x,y,x^{-1},y^{-1}]}{\langle f(x,y),~g(x,y) \rangle}=:\frac{R}{I},
\label{eq: m anyon}
\end{equation}
which corresponds to the Laurent polynomial ring $R$ modulo the ideal $I$ generated by $f(x,y)$ and $g(x,y)$. 
Similarly, we have
\begin{equation}
    \text{$e$-type anyons} = \frac{\mathbb{Z}_2[x,y,x^{-1},y^{-1}]}{\langle \overline{f(x,y)},~\overline{g(x,y)} \rangle}.
\label{eq: e anyon}
\end{equation}
The numbers of $e$-anyons and $m$-anyons are equal since they can be re-arranged and paired as $\{e_1, m_1\}, \{e_2, m_2\}, \dots$, forming a direct sum of Kitaev toric codes~\cite{bombin_Stabilizer_14, ruba2024homological}. By Eq.~\eqref{eq: gsd and anyon number}, this leads to the following theorem:
\begin{theorem}
    The maximal logical dimension $k_{\mathrm{max}}$ of the stabilizer codes in Eq.~\eqref{eq: stabilizer}, parameterized by two polynomials $f(x,y)$ and $g(x,y)$ on a torus, is given by twice the number of independent monomials in $R$ quotient by the ideal $I=\langle f(x, y), g(x,y) \rangle$:
    \begin{equation}
        k_\mathrm{max} = 2 \dim \left(
        \frac{\mathbb{Z}_2[x,y,x^{-1},y^{-1}]}{\langle f(x,y),~g(x,y) \rangle}
        \right).
    \label{eq: maximal logical dimension}
    \end{equation}
\label{thm: maximal logical dimension}
\end{theorem}
{\change
Indeed, Corollary 4.5 of Ref.~\cite{haah_module_13} asserts that
\begin{equation}
  k = \dim_{\mathbb{F}_2}\bigl(\mathrm{coker}\,\epsilon\bigr)
\end{equation}
for an untwisted torus of arbitrary size. Since it is upper bounded by Eq.~\eqref{eq: maximal logical dimension}, Theorem~\ref{thm: maximal logical dimension} follows immediately.
Here, however, we prefer to invoke the anyon picture both to build physical intuition and to streamline our subsequent discussion of the associated logical operators.
}

The ground state degeneracy could depend on the torus length, as observed in the Wen plaquette model~\cite{wen2003quantum}, the Watanabe-Cheng-Fuji toric code~\cite{watanabe2023ground}, and fraction models~\cite{chamon2005quantum, haah2011local, vijay2016fracton, shirley2018fracton, dua2019sorting, nandkishore2019fractons, Dua2019Compactifyingfracton,Li2020Fracton,  pretko2020fracton, qi2021fracton, Li2021Fracton, Song2022Optimal, tan2023fracton, Li2023Hierarchy, Zhu2023Fracton, Li2024fractonToeplitz, Song2024FractonStatistics, Zhang2024HigherCA}.
Eq.~\eqref{eq: gsd and anyon number} holds in the infrared limit, giving the maximal Hilbert space dimension for a finite torus. We will examine finite-size effects in more detail later.

\begin{figure*}[t]
    \centering
    \subfigure[\normalsize Twisted torus in three-dimensional space.]{\raisebox{4ex}{\includegraphics[width=0.48\linewidth]{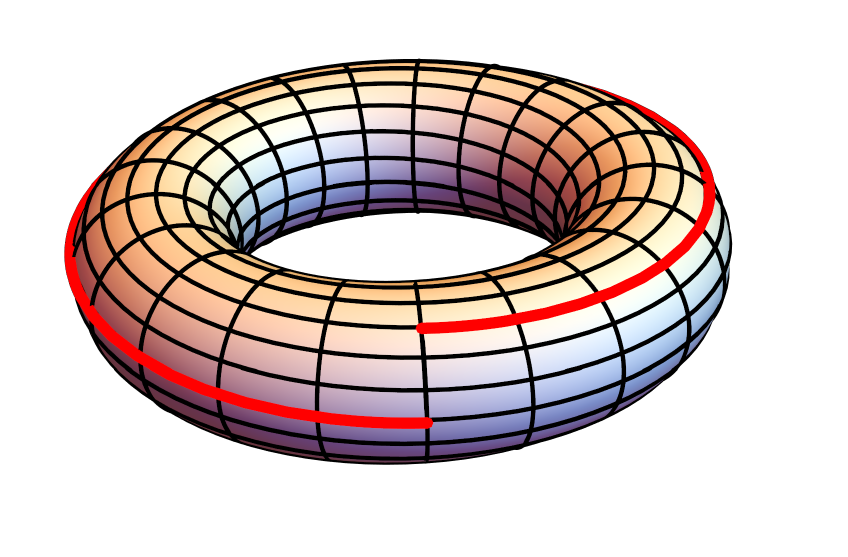}}}
    \subfigure[\normalsize Two-dimensional projection of the twisted torus.]{\includegraphics[width=0.35\linewidth]{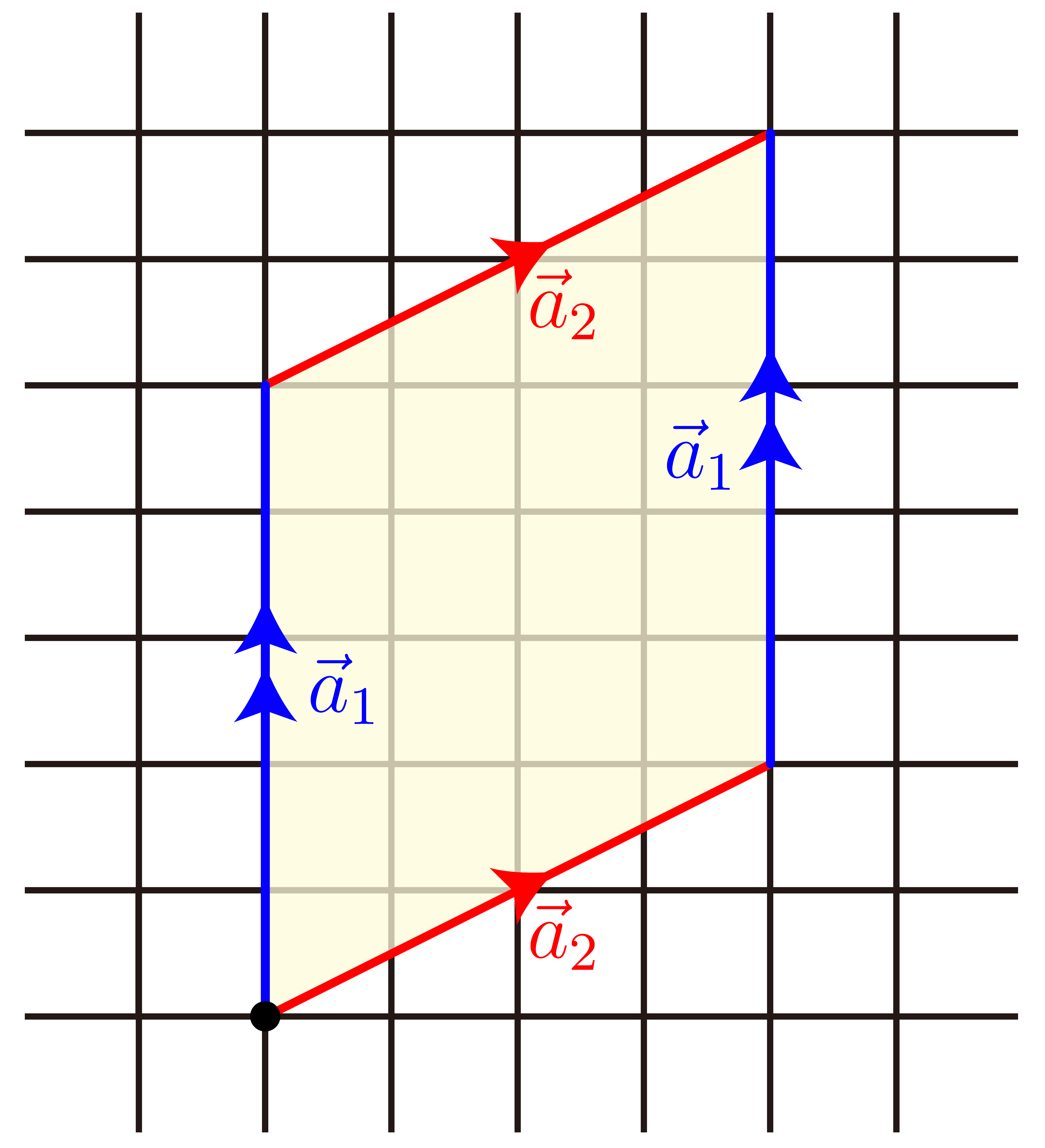}}
    \caption{(a) A twisted torus embedded in three-dimensional space. The torus undergoes a twist along its longitudinal direction by an angle that is a fraction of $2\pi$, as indicated by the red curve tracing the large cycle. (b) A two-dimensional projection of the twisted torus, where points related by the lattice vectors $\vec{a}_1$ and $\vec{a}_2$ are identified. The parallelogram's opposite edges are identified, forming the twisted torus. The twisted torus was previously employed in Ref.~\cite{Hastings2021Fiber} to construct fiber bundle codes.}
    \label{fig: twisted torus}
\end{figure*}

Now, we present several examples to illustrate the application of this theorem.
\begin{example}
{\bf Kitaev toric code.} The Kitaev toric code corresponds to $f(x, y) = 1 + x$ and $g(x, y) = 1 + y$ (as shown in Fig.~\ref{fig: generalized_toric_code} without extra terms). It has only one independent monomial, $1$, i.e., all monomial $x^a y^b$ can be expressed as
\begin{equation}
    x^a y^b = a_1 + p(x,y) f(x,y) + q(x,y) g(x,y),
\end{equation}
with $a_1 \in \ZZ_2$ and $p(x,y),~q(x,y) \in R$. For example,
\begin{equation}
    x^2 = 1 + (1+x)(1+x).
\end{equation}
This implies $k_{\mathrm{max}}=2$. This agrees that the Kitaev toric code on a torus has $2^{k_{\mathrm{max}}} = 4$ ground states.
\label{example: Kitaev toric code}
\end{example}

\begin{example}
{\bf Color code.} The color code is defined by $f(x, y) = 1 + x + xy$ and $g(x, y) = 1 + y + xy$~\cite{bombin2006topological, eberhardt2024pruning}. It is straightforward to verify that the independent monomials are $1$ and $x$, and all other monomials can be generated as
\begin{equation}
    x^a y^b = a_1 + a_x x+ p(x,y) f(x,y) + q(x,y) g(x,y),
\end{equation}
with $a_1,~a_x \in \ZZ_2$ and $p(x,y),~q(x,y) \in R$. Thus, the color code has $k_{\mathrm{max}} = 4$, consistent with the fact that it can be viewed as a folded toric code, forming a direct sum of two Kitaev toric codes~\cite{kubica2015unfolding}.
\end{example}
In the examples above, the polynomials $f$ and $g$ are simple enough to check independent monomials by hand. However, determining independent monomials for general $f$ and $g$ might not be obvious. Therefore, we will introduce a systematic way below using the \textbf{Gr\"obner basis}.

\begin{example}
{\bf $(-1,3,3,-1)$-generalized toric code.}  
This corresponds to the stabilizers of the gross code in Ref.~\cite{Bravyi2024HighThreshold} and the $(3,3)$-BB code in Ref.~\cite{liang2024operator}, described by the following polynomials:
\begin{eqs}
    f(x, y) &= 1 + x + x^{-1} y^3, \\
    g(x, y) &= 1 + y + x^3 y^{-1},
\end{eqs}
or equivalently,
\begin{eqs}
    f'(x, y) &= x + x^2 + y^3, \\
    g'(x, y) &= y + y^2 + x^3.
\end{eqs}
We now compute the Gr\"obner basis with lexicographic ordering $x < y$ using Buchberger’s algorithm \cite{Buchberger1965algorithm}:\footnote{A useful technique is to compute the Gr\"obner basis for the polynomials $f'(x, y) = x f(x, y)$ and $g'(x, y) = y g(x, y)$ to ensure all exponents remain non-negative. Alternatively, one can introduce new variables $\overline{x}$ and $\overline{y}$ to represent $x^{-1}$ and $y^{-1}$, respectively, and include the extra polynomials $x\overline{x} - 1$ and $y\overline{y} - 1$ in the Gr\"obner basis computation.}
\begin{eqs}
    h(x,y) &= 1 + y + y^3 + y^5 + y^6, \\
    i(x,y) &= x + x y + x y^2 + y^3 + y^6, \\
    j(x,y) &= x + x^2 + y^3,
\label{eq: (-1,3,3,-1) h i j}
\end{eqs}
such that
\begin{equation}
    \langle f'(x,y),~g'(x,y) \rangle = \langle h(x,y), ~i(x,y), ~j(x,y) \rangle.
\end{equation}
We can think of the Gr\"obner basis as a generalized version of the Gaussian elimination process, where the polynomials are reorganized into a standard form (analogous to row echelon form). This procedure allows us to identify a set of simplified polynomials that span the same linear space.
From the Gr\"obner basis, we identify the following 8 independent monomials:
\begin{equation}
    1, y, y^2, y^3, y^4, y^5, x, xy,
\end{equation}
which implies that $k_{\mathrm{max}} = 16$.  
These independent monomials can be found algorithmically by first listing all possible monomials in the range specified by Eq.~\eqref{eq: (-1,3,3,-1) h i j}:
\begin{equation}
    \{x^a y^b \mid 0 \leq a < 2, ~0 \leq b < 6\},
\label{eq: range of monomials}
\end{equation}
and then applying Gaussian elimination to determine their linear relations, reducing the set accordingly \cite{liang2023extracting}.

Note that this code has $k_{\mathrm{max}} = 16$, which is larger than the $k = 12$ in the gross code $[144, 12, 12]$ (Ref.~\cite{Bravyi2024HighThreshold}). This discrepancy arises from the effect of finite torus, which will be further discussed in the next section.
\label{example: -1 3 3 -1 generalized TC on infinite plane}
\end{example}


We observe that the $(-1,3,3,-1)$-generalized toric code appears multiple times in Tables~\ref{tab: n_k_d 1}, \ref{tab: n_k_d 2}, ~\ref{tab: n_k_d 3}, and~\ref{tab: n_k_d 4}, highlighted in blue. Additionally, we introduce another set of stabilizers, the $(-1,-3,3,-1)$-generalized toric code, which also appear frequently in these tables, highlighted in green.

\begin{example}
{\bf $(-1,-3,3,-1)$-generalized toric code.}  
This corresponds to the stabilizers of the $(3,-3)$-BB code in Ref.~\cite{liang2024operator}, described by the following polynomials:\footnote{Alternatively, the polynomials can be expressed as
\begin{equation}
    f(x, y) = x + x^2 + y^{-3}, \quad g(x, y) = y + y^2 + x^3.
\end{equation}
}
\begin{eqs}
    f(x, y) &= 1 + x + x^{-1}y^{-3}, \\
    g(x, y) &= 1+ y + x^{3} y^{-1}.
\end{eqs}
We first compute its Gr\"obner basis:
\begin{eqs}
    h(x,y) &= 1 + y + y^3 + y^4 + y^6 + y^{10} + y^{11}, \\
    i(x,y) &= x + y + x y + y^2 + x y^2 + y^5 + y^{10}, \\
    j(x,y) &= x + x^2 + y^4 + y^5 + y^7 + y^{10},
\label{eq: (-1,-3,3,-1) h i j}
\end{eqs}
such that
\begin{equation}
    \langle f(x,y),~g(x,y) \rangle = \langle h(x,y), ~i(x,y), ~j(x,y) \rangle.
\end{equation}
From the Gr\"obner basis, it is straightforward to identify $13$ independent monomials:  
\begin{equation}
    1, y, y^2, y^3, y^4, y^5, y^6, y^7, y^8, y^9, y^{10}, x, xy,
\end{equation}
implying $k_{\mathrm{max}} = 26$.  
\label{example: -1 -3 3 -1 generalized TC on infinite plane}
\end{example}
The fact that the polynomial $h(x,y)$ is alway univariate in terms of $y$ follows from the TO condition~\eqref{eq: TO condition for f and g}. By B\'ezout’s theorem~\cite{Cox2015Ideals}, two coprime polynomials in two variables intersect at finitely many points. Equivalently, the ideal $I = \langle f, g \rangle$ is zero-dimensional, so the quotient ring $\mathbb{Z}_2[x,y]/I$ is a finite-dimensional vector space, where each variable is algebraic. By the ``Shape Lemma,'' the lexicographic Gr\"obner basis of $I$ must contain a univariate polynomial~\cite{Becker1994ShapeLemma, Szanto2024LectureNotes}.

\subsection{Effects of finite geometries on tori}
Previously, we considered the infinite plane geometry when deriving the anyons in Eq.~\eqref{eq: m anyon}. However, the period of an anyon, defined as the shortest translation distance it can move while preserving its syndrome pattern~\eqref{eq: stabilizer violation}, is often greater than $1$ \cite{haah_module_13, watanabe2023ground, liang2023extracting}. This imposes a compatibility condition when compactifying the infinite plane onto a finite torus. If the period of an anyon does not divide the torus length, the anyon vanishes on the finite torus.
The following theorem specifies the condition on the lengths of the torus for the stabilizer codes to have maximal {\change logical dimension} $k=k_\mathrm{max}$:
\begin{theorem}
    Consider the stabilizer code~\eqref{eq: stabilizer} implemented on an untwisted $L_x \times L_y$ torus. The minimal torus lengths $L_x$ and $L_y$ for which the stabilizer code achieves $k = k_\mathrm{max}$ are the smallest values satisfying  
    \begin{equation}
        x^{L_x} - 1, \quad y^{L_y} - 1 \in \langle f(x,y),~g(x,y) \rangle.
        \label{eq: x Lx -1 y Ly -1 in ideal}
    \end{equation}
\label{thm: full logical dimension on untwisted torus}
\end{theorem}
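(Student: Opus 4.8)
The plan is to turn the statement into a single dimension comparison in the Laurent polynomial ring. Write $R_L := R/\langle x^{L_x}-1,\,y^{L_y}-1\rangle$ for the ring attached to the untwisted $L_x\times L_y$ torus, in which the two generators act as the unit translations around the cycles. Running the derivation of Eqs.~\eqref{eq: m anyon} and~\eqref{eq: e anyon} over $R_L$ instead of $R$, the $m$-type and $e$-type anyons that survive on the torus are classified by $R_L/\langle f,g\rangle$ and $R_L/\langle\overline f,\overline g\rangle$, which have equal dimension since $e$- and $m$-anyons pair into a direct sum of Kitaev toric codes; hence by Eq.~\eqref{eq: gsd and anyon number}
\begin{equation}
    k \;=\; 2\dim_{\FF_2}\!\bigl(R_L/\langle f,g\rangle\bigr) \;=\; 2\dim_{\FF_2}\!\bigl(R/\langle f,\,g,\,x^{L_x}-1,\,y^{L_y}-1\rangle\bigr).
\end{equation}
Making this identity rigorous is the step I expect to be the main obstacle: one must check that restricting to a finite torus neither spoils the relation $\mathrm{GSD}=|\mathcal A|$ nor alters the anyon content beyond discarding the anyons whose period fails to divide the torus length. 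I would discharge this by appealing to the finite-torus logical-dimension computations of Refs.~\cite{liang2023extracting, Eberhardt2025Logical}—equivalently, by counting logical operators directly from the parity-check polynomials reduced modulo $x^{L_x}-1$ and $y^{L_y}-1$—rather than re-deriving a finite-size TQFT partition function.

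Granted the identity, the core argument is short. Set $A := R/\langle f,g\rangle$, so $k_{\max}=2\dim_{\FF_2}A$ by Theorem~\ref{thm: maximal logical dimension}. Since $R_L/\langle f,g\rangle = R/\bigl(\langle f,g\rangle+\langle x^{L_x}-1,\,y^{L_y}-1\rangle\bigr)$ is the quotient of $A$ by the ideal generated by the images of $x^{L_x}-1$ and $y^{L_y}-1$, the natural map $A\twoheadrightarrow R_L/\langle f,g\rangle$ is surjective, so $k\le k_{\max}$, with equality precisely when that ideal is zero, i.e.\ precisely when $x^{L_x}-1\in\langle f,g\rangle$ and $y^{L_y}-1\in\langle f,g\rangle$. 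This already establishes the equivalence between $k=k_{\max}$ and the membership condition~\eqref{eq: x Lx -1 y Ly -1 in ideal} on a fixed torus.

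It remains to produce the minimal lengths and show they are jointly attainable. By the TO condition~\eqref{eq: TO condition for f and g}, $f$ and $g$ are coprime, so $\langle f,g\rangle$ is zero-dimensional (by B\'ezout's theorem and the Shape Lemma, exactly as in the discussion after Example~\ref{example: -1 -3 3 -1 generalized TC on infinite plane}); hence $A$ is a finite-dimensional $\FF_2$-algebra, in particular a finite ring, and the images of $x$ and $y$ are units of $A$ (with inverses the images of $x^{-1},y^{-1}\in R$). A unit of a finite ring has finite multiplicative order, so there are smallest $L_x^{*},L_y^{*}$ with $x^{L_x^{*}}-1\in\langle f,g\rangle$ and $y^{L_y^{*}}-1\in\langle f,g\rangle$. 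Moreover, the ideal $\langle f,g\rangle\cap\FF_2[x,x^{-1}]$ is principal, say $(p_x)$, so $x^{L}-1\in\langle f,g\rangle$ iff $p_x\mid x^{L}-1$; using $\gcd(x^a-1,x^b-1)=x^{\gcd(a,b)}-1$ and $x^a-1\mid x^{ka}-1$, the valid $x$-lengths are exactly the positive multiples of $L_x^{*}$ (and similarly for $y$). Because the two conditions in~\eqref{eq: x Lx -1 y Ly -1 in ideal} decouple into an $x$-statement and a $y$-statement, the pair $(L_x^{*},L_y^{*})$ is simultaneously minimal, which is the claim. Of the three steps, only the first—the finite-torus anyon count—is substantive; the remaining commutative algebra is routine.
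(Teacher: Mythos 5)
Your proof is correct and follows essentially the same route as the paper's: where the paper demands that each anyon class $[0,a(x,y)]$ return to its own superselection sector after transport around a cycle (forcing $(y^{L_y}-1)a\in\langle f,g\rangle$ for all $a$, hence $y^{L_y}-1\in\langle f,g\rangle$), you observe equivalently that the surjection $R/\langle f,g\rangle \twoheadrightarrow R/\langle f,g,\,x^{L_x}-1,\,y^{L_y}-1\rangle$ is an isomorphism iff both binomials lie in $\langle f,g\rangle$. Your closing argument for the existence and joint attainability of the minimal pair---finite multiplicative order of the units $x,y$ in the finite algebra $R/\langle f,g\rangle$, with the valid lengths forming exactly the multiples of the minimal ones---is a worthwhile supplement that the paper's proof leaves implicit and only addresses afterward via the divisibility corollary.
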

\begin{proof}
     Consider an $m$-anyon represented by the equivalence class $[0, a(x,y)]$. We must ensure that moving it along a large cycle in the $y$-direction does not change the superselection sector, so the ground state space remains invariant. This requires that after a full translation by $L_y$, the anyon remains in the same equivalence class:
    \begin{equation}
        y^{L_y} a(x,y) = a(x,y) + p(x,y) f(x,y) + q(x,y) g(x,y),
    \end{equation}
    for some polynomials $p(x,y)$ and $q(x,y)$.  
    Since this equation must hold for arbitrary $a(x,y)$, the term $y^{L_y} - 1$ must be expressible as a linear combination of $f(x,y)$ and $g(x,y)$, meaning:
    \begin{equation}
        y^{L_y} - 1 \in \langle f(x,y),~g(x,y) \rangle.
        \label{eq: y Ly -1 in ideal}
    \end{equation}
    The same argument applies to $e$-anyons and translations in the $x$-direction, leading to the analogous condition:
    \begin{equation}
        x^{L_x} - 1 \in \langle f(x,y),~g(x,y) \rangle.
        \label{eq: x Lx -1 in ideal}
    \end{equation}
    Therefore, the minimal values of $L_x$ and $L_y$ satisfying these conditions determine the torus dimensions required to achieve the maximal logical dimension $k = k_{\mathrm{max}}$.
\end{proof}
The minimal values of $L_x$ and $L_y$ satisfying Eq.~\eqref{eq: x Lx -1 y Ly -1 in ideal} can be computed as follows:
\begin{corollary}
    Given the polynomials $f(x,y)$ and $g(x,y)$, we obtain univariate polynomials $h(y)$ and $h'(x)$ by computing their Gr\"obner bases with two different orderings. The values $L_x$ and $L_y$ are determined by the minimal solutions of the following divisibility conditions:
    \begin{equation}
        h'(x) \mid x^{L_x} - 1, \quad h(y) \mid y^{L_y} - 1.
    \end{equation}
    If $h(y)$ is an irreducible polynomial, then $L_y$ must be a divisor of $2^{\deg(h(y))} - 1$~\cite{lidl1986introduction}, where $\deg(p(y))$ denotes the highest degree of the polynomial $p(y)$. Therefore, it suffices to check the factors of $2^{\deg(h(y))} - 1$. If $h(y)$ is reducible, we decompose it into irreducible factors and determine their shortest periodicities separately. The same procedure applies to $h'(x)$.
\end{corollary}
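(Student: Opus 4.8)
The plan is to reduce the statement to an ideal-membership question via Theorem~\ref{thm: full logical dimension on untwisted torus}, turn that into a divisibility condition by elimination theory for Gr\"obner bases, and then extract the minimal exponents from the classical theory of the order of a polynomial over $\mathbb{F}_2$. That theorem says the minimal admissible lengths are the least $L_x,L_y$ with $x^{L_x}-1\in\langle f,g\rangle$ and $y^{L_y}-1\in\langle f,g\rangle$ in $R$; since $y^{L_y}-1$ already lies in the subring $\mathbb{Z}_2[y,y^{-1}]$, its membership only depends on the contraction ideal $I_y:=\langle f,g\rangle\cap\mathbb{Z}_2[y,y^{-1}]$, and symmetrically for $I_x:=\langle f,g\rangle\cap\mathbb{Z}_2[x,x^{-1}]$.

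First I would invoke, as already noted before the corollary, that the TO condition~\eqref{eq: TO condition for f and g} forces $f,g$ to be coprime, so by B\'ezout's theorem $\langle f,g\rangle$ is zero-dimensional and $R/\langle f,g\rangle$ is a finite-dimensional $\mathbb{Z}_2$-vector space. Consequently $I_y$ is a nonzero ideal of the principal ideal domain $\mathbb{Z}_2[y,y^{-1}]$, hence principal; among its generators, which differ only by monomial units, exactly one is an honest polynomial with nonzero constant term, and this defines $h(y)$. By the Elimination Theorem this $h(y)$ coincides, up to a monomial factor, with the member of the reduced lexicographic Gr\"obner basis lying in $\mathbb{Z}_2[y]$ for the order eliminating $x$; the order eliminating $y$ produces $h'(x)$ generating $I_x$. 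Membership then reduces to divisibility: $y^{L_y}-1\in\langle f,g\rangle\iff h(y)\mid y^{L_y}-1$ and $x^{L_x}-1\in\langle f,g\rangle\iff h'(x)\mid x^{L_x}-1$, so $L_x$ and $L_y$ are precisely the minimal solutions of these divisibilities.

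Next I would compute the minimal $L_y$ with $h(y)\mid y^{L_y}-1$, i.e.\ the multiplicative order of $y$ in $\mathbb{Z}_2[y]/\langle h(y)\rangle$, which is well defined since $h(0)\neq0$ makes $y$ invertible modulo $h(y)$ and finite since the ring is finite --- so $L_x,L_y$ always exist, a consequence of zero-dimensionality. If $h(y)$ is irreducible of degree $m$ then $\mathbb{Z}_2[y]/\langle h(y)\rangle\cong\mathbb{F}_{2^m}$, whose multiplicative group is cyclic of order $2^m-1$, whence $L_y\mid 2^m-1$ and one need only test the divisors of $2^m-1$. For $h(y)=\prod_i h_i(y)^{e_i}$ with distinct irreducible $h_i$, the Chinese Remainder Theorem splits $\mathbb{Z}_2[y]/\langle h\rangle\cong\prod_i\mathbb{Z}_2[y]/\langle h_i^{e_i}\rangle$, so $L_y$ is the least common multiple of the orders of $y$ modulo the $h_i^{e_i}$, which by the classical formula equal $\mathrm{ord}(h_i)\cdot 2^{\lceil\log_2 e_i\rceil}$~\cite{lidl1986introduction}; the same reasoning gives $L_x$ from $h'(x)$.

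The step I expect to be the main obstacle is the bookkeeping between the Laurent ring $R$ and the ordinary polynomial rings in which Buchberger's algorithm and elimination actually run. Replacing $f,g$ by cleared polynomials $x^{\alpha}y^{\beta}f$ and $x^{\gamma}y^{\delta}g$ can introduce spurious zero-dimensional components supported on $\{x=0\}$ or $\{y=0\}$, contaminating the naive elimination polynomial with spurious factors of $x$ or of $y$; these monomial factors must be stripped --- equivalently, one adjoins the relations $x\overline{x}-1$ and $y\overline{y}-1$ and eliminates $x,\overline{x},\overline{y}$, as in the footnote following Eq.~\eqref{eq: (-1,3,3,-1) h i j} --- in order to obtain the correct $h(y)$ with $h(0)\neq0$. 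The only remaining subtlety is the non-squarefree case of $h$, which is exactly what the order-of-a-polynomial formula cited above resolves.
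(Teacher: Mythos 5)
Your proposal is correct and follows the same route the paper implicitly relies on (elimination to the univariate generator $h(y)$ of the contraction ideal, then the order of a polynomial over $\mathbb{F}_2$); the paper states this corollary without proof, leaning on the preceding Shape Lemma discussion, and your argument supplies the missing justification, including the Laurent-ring bookkeeping and the non-squarefree case that the paper glosses over.
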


For instance, in Example~\ref{example: -1 3 3 -1 generalized TC on infinite plane}, the $(3,3)$-BB code, its Gr\"obner basis includes the polynomial
\begin{equation}
    h(y) = 1 + y + y^3 + y^5 + y^6.
\end{equation}
Verified by a computer, we determine that the minimal $L_y$ satisfying  
\begin{equation}
    h(y) \mid y^{L_y} - 1
\end{equation}
is $L_y = 12$. In this example, the stabilizers exhibit symmetry under the exchange of $x$ and $y$, implying that the period in the $x$-direction is also $L_x = 12$. Therefore, when the stabilizers are placed on a $12 \times 12$ torus, the logical dimension is $k = k_{\mathrm{max}} = 16$.

Similarly, in Example~\ref{example: -1 -3 3 -1 generalized TC on infinite plane}, the Gr\"obner basis includes the polynomial  
\begin{equation}
    h(y) = 1 + y + y^3 + y^4 + y^6 + y^{10} + y^{11}.
\end{equation}
Verified by a computer, we determine that the minimal $L_y$ with
\begin{equation}
    h(y) \mid y^{L_y} - 1
\end{equation}
is $L_y = 762$.  
Similarly, in the $x$-direction, an alternative Gr\"obner basis can be obtained by using a different monomial ordering:  
\begin{equation}
    h'(x) = 1 + x^4 + x^6 + x^7 + x^9 + x^{10} + x^{11},
\end{equation}
which satisfies $h'(x) \mid x^{762} - 1$ and yields $L_x = 762$.
Thus, to ensure the full logical dimension with $k = k_\mathrm{max} = 26$, the code should be implemented on an untwisted torus with size $762 \times 762$.

Theorem~\ref{thm: full logical dimension on untwisted torus} can be easily generalized to apply to twisted tori as well:
\begin{corollary}
    Consider the stabilizer code~\eqref{eq: stabilizer} defined on a twisted torus with two lattice vectors, $\vec{a}_1 = (0, \alpha)$ and $\vec{a}_2 = (\beta, \gamma)$, as illustrated in Fig.~\ref{fig: twisted torus}. The torus for which the stabilizer code achieves the maximum code dimension, $k = k_\mathrm{max}$, corresponds to the values of $\alpha$, $\beta$, and $\gamma$ that satisfy the following condition:
    \begin{equation}
        y^{\alpha} - 1, \quad x^\beta y^{\gamma} - 1 \in \langle f(x,y),~g(x,y) \rangle.
    \end{equation}
\end{corollary}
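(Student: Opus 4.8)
\textit{Proof proposal.} The plan is to run the argument of Theorem~\ref{thm: full logical dimension on untwisted torus} verbatim, with the two independent translations by $L_x$ and $L_y$ replaced by the two translations generated by the lattice vectors $\vec a_1 = (0,\alpha)$ and $\vec a_2 = (\beta,\gamma)$ of the twisted torus. First I would recall why $k = k_{\mathrm{max}}$ is equivalent to ``every anyon survives the compactification.'' Assuming the TO condition~\eqref{eq: TO condition for f and g}, so that the anyon classification of Eqs.~\eqref{eq: m anyon} and~\eqref{eq: e anyon} applies, the code is a finite-depth-circuit stack of Kitaev toric codes with the $e$-- and $m$-sectors paired; by Eq.~\eqref{eq: gsd and anyon number} its ground-state degeneracy on a given torus counts exactly the anyon superselection sectors that are consistent with the boundary identifications. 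An anyon is consistent precisely when the set of translations preserving its syndrome — a sublattice of $\mathbb{Z}^2$ — contains the identification lattice $\Lambda := \mathbb{Z}\vec a_1 + \mathbb{Z}\vec a_2$; any anyon failing this is killed and strictly lowers the degeneracy. Hence $k = k_{\mathrm{max}}$ iff $\Lambda$ lies in the period lattice of \emph{every} anyon.

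Next I would translate this into ring language. In the Laurent-polynomial encoding, shifting a Pauli operator by the integer vector $(m,n)$ multiplies its polynomial by $x^m y^n$, so dragging an $m$-anyon $[0,a(x,y)]$ once around the noncontractible loop traced by $\vec a_1$ acts by $a \mapsto y^{\alpha} a$, and once around the loop traced by $\vec a_2$ by $a \mapsto x^{\beta} y^{\gamma} a$. Invariance of the syndrome modulo stabilizers reads $(y^{\alpha}-1)\,a \in \langle f,g\rangle$ and $(x^{\beta} y^{\gamma}-1)\,a \in \langle f,g\rangle$. Taking $a=1$ shows these memberships are necessary, and since an ideal absorbs multiplication by arbitrary $a$ they are also sufficient for all $a$; thus all $m$-anyons survive iff $y^{\alpha}-1,\ x^{\beta}y^{\gamma}-1 \in \langle f,g\rangle$. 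The identical argument in the $Z$-sector gives $\overline{y^{\alpha}-1},\ \overline{x^{\beta}y^{\gamma}-1} \in \langle \overline f,\overline g\rangle$; but over $\mathbb{Z}_2$ one has $\overline{y^{\alpha}-1} = y^{-\alpha}+1 = y^{-\alpha}(y^{\alpha}-1)$, a unit multiple of $y^{\alpha}-1$, and bar is a ring automorphism sending $\langle f,g\rangle$ to $\langle \overline f,\overline g\rangle$, so the $e$-sector conditions coincide with the $m$-sector ones. Finally, invariance under the two generators upgrades automatically to invariance under all of $\Lambda$ via $x^{m+m'}y^{n+n'} - 1 = x^{m}y^{n}(x^{m'}y^{n'}-1) + (x^{m}y^{n}-1)$, so no extra conditions from other lattice vectors appear. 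Combining, $k = k_{\mathrm{max}}$ iff $y^{\alpha}-1$ and $x^{\beta}y^{\gamma}-1$ both lie in $\langle f,g\rangle$, which is the claim.

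The step I expect to carry the real weight is the first one: pinning down that ``maximal logical dimension'' means exactly ``no anyon is annihilated by the twist,'' which leans on the TQFT identity $Z(T^2\times S^1)=|\mathcal{A}|$ and on the classification result that the code is finite-depth equivalent to a toric-code stack whose torus degeneracy is controlled by the admissible anyon sectors. Once that dictionary is granted, everything else is routine ideal-membership bookkeeping. A minor subtlety deserving one line is the choice of which noncontractible cycle is associated with which lattice vector, but since the final statement is symmetric in the two vectors — each merely contributes one membership requirement — this ambiguity is immaterial; likewise the degenerate case where the quotient ring is trivial (so $k_{\mathrm{max}}=0$) makes both conditions vacuous, consistently.
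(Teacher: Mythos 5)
Your proposal is correct and follows essentially the same route as the paper, which proves this corollary implicitly by rerunning the proof of Theorem~\ref{thm: full logical dimension on untwisted torus} with the translations $L_x$, $L_y$ replaced by the twisted lattice generators $\vec a_1$, $\vec a_2$, so that invariance of each anyon class forces $y^{\alpha}-1$ and $x^{\beta}y^{\gamma}-1$ into $\langle f,g\rangle$. Your write-up is in fact slightly tighter than the paper's in a few places (necessity via $a=1$ plus sufficiency via ideal absorption, the bar-automorphism identification of the $e$- and $m$-sector conditions, and closure under the full identification lattice), but these are refinements of the same argument rather than a different one.
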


For Example~\ref{example: -1 -3 3 -1 generalized TC on infinite plane}, i.e., the $(3, -3)$-BB code, we can verify the following relation:
\begin{equation}
    x^{6}y^{360}-1, \quad y^{762}-1 \in \langle f(x,y),~g(x,y) \rangle,
\end{equation}
by computing the Gr\"obner basis of the ideal
\begin{equation}
    \langle f(x,y), ~ g(x,y) \rangle,
\end{equation}
and the Gr\"obner basis of the (extended) ideal
\begin{equation}
    \langle f(x,y), ~ g(x,y), ~ x^{6}y^{360}-1, ~ y^{762}-1 \rangle,
\end{equation}
and verify that they are identical.
In other words, the $(3, -3)$-BB code achieves its maximal logical dimension of $k = 26$ on the twisted torus with $\vec{a}_1 = (0, 762)$ and $\vec{a}_2 = (6, 360)$.
Compared to the untwisted $762 \times 762$ torus, the twisted torus achieves the same logical dimension with a significantly smaller system size.

{\change
We emphasize that parametrizing twisted tori by
\begin{equation}
\vec{a}_1=(0,\alpha),\quad \vec{a}_2=(\beta,\gamma)
\end{equation}
indeed covers every case. As an example, begin with a generic twisted torus with
\begin{equation}
    \vec{a}_1=(2,5),~ \vec{a}_2=(7,3) \Rightarrow M:=
    \begin{pmatrix}
    2 & 5 \\
    7 & 3
    \end{pmatrix}.
\end{equation}
Computing its Hermite normal form yields a unimodular transformation $U$ and an upper‐triangular matrix $H$:
\begin{equation}
U ~M
=\begin{pmatrix}
4 & -1 \\
7 & -2
\end{pmatrix}
\begin{pmatrix}
2 & 5 \\
7 & 3
\end{pmatrix}
=
\begin{pmatrix}
1 & 17 \\
0 & 29
\end{pmatrix}
= H,
\label{eq: 1 17 0 29 from 2 5 7 3}
\end{equation}
with
\begin{equation}
\bigl|\det U\bigr| = 1.
\end{equation}
Therefore, an equivalent choice of basis vectors is
\begin{equation}
\vec{a}_1=(0,29),\quad \vec{a}_2=(1,17),
\end{equation}
which represents the same twisted torus.

}

Let's consider another interesting example:
\begin{example}
{\bf $(-1,-4,4,-1)$-generalized toric code.}  
This is also known as the $(4,-4)$-BB code, whose stabilizers are defined by 
\begin{eqs}
    f(x, y) &= 1 + x + x^{-1} y^{-4}, \\
    g(x, y) &= 1 + y + x^{4} y^{-1}.
\end{eqs}
Using Gr\"obner basis computation, we obtain an alternative set of polynomials:  
\begin{eqs}
    h(x,y) &= 1 + y^{17} + y^{20}, \\
    i(x,y) &= x + y + y^2 + y^9 + y^{12} + y^{13} + y^{16},
\end{eqs}
generating the same ideal.
It follows that there are $20$ independent monomials:
\begin{equation}
    1, y, y^2, \dots, y^{19},
\end{equation}
implying $k_\mathrm{max} = 40$. Notably, $1 + y^{17} + y^{20}$ is a primitive polynomial over $\mathbb{Z}_2$, i.e., $\mathbb{Z}_2[y]/(1+y^{17}+y^{20})$ forms a finite field. Consequently, the minimal length $L_y$ satisfying  
\begin{equation}
    1 + y^{17} + y^{20} \mid y^{L_y} - 1
\end{equation}
is given by $L_y = 2^{20}-1 = 1,048,575$.  
Similarly, computing the Gr\"obner basis using an alternative ordering yields  
\begin{equation}
    h'(x,y) = 1 + x + x^2 + x^{18} + x^{20}.
\end{equation}
The minimal $L_x$ satisfying  
\begin{equation}
    1 + x + x^2 + x^{18} + x^{20} \mid x^{L_x}-1
\end{equation}
is determined as $L_x = (2^{20}-1)/15 = 69,905$.  
Therefore, for the $(4,-4)$-BB code to achieve the full logical dimension $k = k_\mathrm{max} = 40$ on an untwisted torus, its minimum size is $69,905 \times 1,048,575$.
\end{example}

{\change

\begin{center}
\textit{Logical operators for $k=k_\mathrm{max}$}
\end{center}

When the logical dimension attains its maximum value $k_{\max}$, the logical $X$ and $Z$ operators admit a particularly simple description: they are the anyon string operators wrapping the two non-contractible loops of the (twisted) torus. We begin by constructing the logical $X$ operators, which correspond to the $m$-type anyon 
\begin{equation}
v_m = [0,~a(x,y)],
\end{equation}
the violation of the $B_p$ stabilizers. It suffices to solve the string operator for the fundamental anyon 
\begin{equation}
v_m^0 = [0,1],
\end{equation}
since all other $m$-type string operators follow by multiplying with appropriate Laurent monomials in $R$.
According to Ref.~\cite{liang2023extracting}, the {\bf anyon equation} along the $(0,\alpha)$ direction is
\begin{equation}
  (y^\alpha - 1)~[0,1]
    = p_1(x,y)~\epsilon(\mX_1)
    + p_2(x,y)~\epsilon(\mX_2),
\label{eq: m-string-anyon}
\end{equation}
and the corresponding string operator is
\begin{equation}
  O^{(0,\alpha)}_{\mathrm{string}}
  = \begin{bmatrix}
      p_1(x,y) \\
      p_2(x,y) \\ \hline
      0 \\
      0
    \end{bmatrix},
\end{equation}
which creates two anyons $[0,1]$ and $[0, y^\alpha]$. On the torus, these two positions coincide and the anyons annihilate, so $O^{(0,\alpha)}_{\mathrm{string}}$ is a logical $X$ operator.\footnote{We have omitted the $Z$-syndrome terms since they only excite $A_v$ stabilizers.}

Substituting the syndromes from Eq.~\eqref{eq: get_error_syndromes} into Eq.~\eqref{eq: m-string-anyon} yields
\begin{equation}
  y^\alpha - 1 = p_1(x,y)~g(x,y) + p_2(x,y)~f(x,y).
\end{equation}
By hypothesis, maximal logical dimension requires $h(y)\mid (y^\alpha-1)$, so
\begin{equation}
  y^\alpha - 1 = q(y)~h(y).
\label{eq: y alpha-1 = q h}
\end{equation}
Moreover, the Gr\"obner basis computation of the ideal $I=\langle f(x,y),~g(x,y)\rangle$ produces $h(y)$ as a combination of $f$ and $g$ by construction. Thus one immediately obtains $p_1$ and $p_2$ and hence the operator $O^{(0,\alpha)}_{\mathrm{string}}$.
Let the monomials
\begin{equation}
  x^{n_1}y^{m_1},~x^{n_2}y^{m_2},~\dots,~x^{n_{k/2}}y^{m_{k/2}}
\end{equation}
be the basis of the quotient module $R/I$. Then the logical $X$ operators along the $(0,\alpha)$ cycle are
\begin{equation}
  \overline{X}_j = x^{n_j}y^{m_j}~O^{(0,\alpha)}_{\mathrm{string}},
  \quad \forall j \in \{ 1,2,\dots,k/2\}.
\label{eq: logical X operators 1}
\end{equation}
Similarly, for the second non‐contractible cycle labeled by $(\beta,\gamma)$, we solve
\begin{equation}
  (x^\beta y^\gamma - 1)~[0,1]
    = p_3(x,y)~\epsilon(\mX_1)
    + p_4(x,y)~\epsilon(\mX_2),
\end{equation}
obtaining
\begin{equation}
  O^{(\beta,\gamma)}_{\mathrm{string}}
  = 
  \begin{bmatrix}
    p_3(x,y) \\
    p_4(x,y) \\ \hline
    0 \\
    0
  \end{bmatrix}.
\end{equation}
Hence, the remaining logical $X$ operators are
\begin{equation}
  \overline{X}_{j + k/2}
  = x^{n_j}y^{m_j}~O^{(\beta,\gamma)}_{\mathrm{string}},
  \quad \forall j \in \{ 1,2,\dots,k/2\}.
\label{eq: logical X operators 2}
\end{equation}
The logical $Z$ operators are obtained in complete analogy by interchanging $X$ and $Z$ syndromes.

There is a subtlety on the twisted torus ($\gamma\neq0$): the factor $x^\beta y^\gamma - 1$ is no longer a univariate polynomial, so Eq.~\eqref{eq: y alpha-1 = q h} cannot be applied directly. In this case, we introduce a new variable $\tilde{x}$ defined by
\begin{equation}
    x^\beta y^\gamma
    =\bigl(x^{\tfrac{\beta}{\gcd(\beta,\gamma)}}\,y^{\tfrac{\gamma}{\gcd(\beta,\gamma)}}\bigr)^{\gcd(\beta,\gamma)}
    =:\tilde{x}^{\gcd(\beta,\gamma)},
\end{equation}
where $\gcd(\beta,\gamma)$ denotes the greatest common divisor of $\beta$ and $\gamma$.  Since the reduced exponents
\begin{equation}
    \tilde{\beta} \;:=\;\frac{\beta}{\gcd(\beta,\gamma)},
    \quad
    \tilde{\gamma} \;:=\;\frac{\gamma}{\gcd(\beta,\gamma)}
\end{equation}
are coprime, Bézout’s identity guarantees integers $\delta,\eta$ such that
\begin{equation}
    \beta\,\delta + \gamma\,\eta = 1.
\end{equation}
Hence, we introduce the basis transformation on exponent vectors:
\begin{equation}
  \begin{pmatrix}a'\\b'\end{pmatrix}
  =
  \begin{pmatrix}\beta & -\eta \\ \gamma & \delta\end{pmatrix}
  \begin{pmatrix}a\\b\end{pmatrix},
\end{equation}
which induces the monomial substitution
\begin{equation}
  \tilde{x}^a \tilde{y}^b \longleftrightarrow x^{a'}y^{b'}.
\label{eq: basis transformation of x and y}
\end{equation}
Since
\begin{equation}
  \det\!\begin{pmatrix}\beta & -\eta \\ \gamma & \delta\end{pmatrix}
  = \beta\delta + \gamma\eta = 1,
\end{equation}
this transformation is invertible over $\mathbb{Z}$.  In these new coordinates,
\begin{equation}
  x^\beta y^\gamma - 1 = \tilde{x}^{\gcd(\beta,\gamma)} - 1,
\end{equation}
which is now a univariate polynomial, so the preceding analysis applies directly.

}

\newpage

\begin{center}
\textit{Frustration between anyon periodicity and system size}
\end{center}
So far, we have discussed how to preserve the full ground state degeneracy $k_\mathrm{max}$ on a torus. However, in practical scenarios, when the torus length is not a multiple of all anyon periodicities, only a subset of the anyons survives, resulting in a ground state space with $k < k_\mathrm{max}$.

To analyze the effects of the periodic boundary conditions on an $L_x \times L_y$ torus, we impose the additional constraints on the polynomials:  
\begin{equation}
    x^{L_x}-1=0, \quad y^{L_y}-1=0.
\end{equation}
Consequently, the number of independent $m$-type anyons in Eq.~\eqref{eq: m anyon} is reduced to:  
\begin{equation}
    \dim \left(
    \frac{\mathbb{Z}_2[x,y,x^{-1},y^{-1}]}{\langle f(x,y),~g(x,y),~x^{L_x} - 1,~y^{L_y} - 1 \rangle}
    \right).
\label{eq: m anyon standard PBC}
\end{equation}
Moreover, we can consider the torus with twisted periodic boundary conditions, as shown in Fig.~\ref{fig: twisted torus}. The twisted torus can be specified by two vectors $\vec{a}_1 = (0, \alpha)$ and $\vec{a}_2 = (\beta, \gamma)$, corresponding to the constraints on polynomials:
\begin{equation}
    y^\alpha-1=0, \quad x^\beta y^\gamma-1=0.
\end{equation}
Accordingly, Theorem~\ref{thm: maximal logical dimension} on a twisted torus becomes:
\begin{theorem}
    On the torus with the twisted periodic boundary condition labeled by $\vec{a}_1 = (0, \alpha)$ and $\vec{a}_2 = (\beta, \gamma)$, the stabilizer codes parameterized by polynomials $f(x,y)$ and $g(x,y)$ in Eq.~\eqref{eq: stabilizer} has logical dimension
    \begin{equation}
        k = 2 \dim \left(
        \frac{\mathbb{Z}_2[x,y,x^{-1},y^{-1}]}{\langle f(x,y),~g(x,y),~y^\alpha - 1,~x^\beta y^\gamma - 1 \rangle}
        \right).
    \label{eq: k formula}
    \end{equation}
\label{thm: k on twisted torus}
\end{theorem}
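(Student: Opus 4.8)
The plan is to reuse, now over the finite group algebra of the twisted torus, the same anyon-counting argument that produced Theorem~\ref{thm: maximal logical dimension}, treating the twisted boundary identification as two extra binomial relations adjoined to the defining ideal.

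First I would set up the algebra of the twisted torus. The qubits of the code~\eqref{eq: stabilizer} sit on the horizontal and vertical edges of the square lattice, i.e.\ on two copies of $\ZZ^2$, and compactifying onto the twisted torus identifies a site with all of its translates by the sublattice $\Lambda\subset\ZZ^2$ generated by $\vec a_1=(0,\alpha)$ and $\vec a_2=(\beta,\gamma)$. In the Laurent-polynomial picture this replaces the ambient ring $R=\ZZ_2[x,y,x^{-1},y^{-1}]\cong\ZZ_2[\ZZ^2]$ by the finite group algebra $R_\Lambda:=\ZZ_2[\ZZ^2/\Lambda]=R/\langle y^\alpha-1,\;x^\beta y^\gamma-1\rangle$, since translation by $\vec a_1$ (resp.\ $\vec a_2$) acts on monomials as multiplication by $y^\alpha$ (resp.\ $x^\beta y^\gamma$), and on the torus each such translation must act trivially. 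The stabilizer columns $A_v,B_p$, the single-edge errors $\mathcal X_i,\mathcal Z_i$, and the syndrome map $\epsilon$ of Eq.~\eqref{eq:get_error_syndromes} are translation covariant, so they descend to $R_\Lambda$ unchanged.

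Next I would classify anyons on the torus exactly as on the plane (cf.\ Eq.~\eqref{eq: m anyon}). An $m$-type excitation is a class $[0,a]$ with $a\in R_\Lambda$; the local operators $\mathcal X_1,\mathcal X_2$ still produce $[0,g]$ and $[0,f]$ as in Eq.~\eqref{eq:get_error_syndromes}, and every locally creatable $m$-configuration lies in the image of the ideal $\langle f,g\rangle$ inside $R_\Lambda$. Hence the nontrivial $m$-type anyons on the twisted torus form
\begin{equation}
    \frac{R_\Lambda}{\langle f,g\rangle}\;=\;\frac{\ZZ_2[x,y,x^{-1},y^{-1}]}{\langle f,\,g,\,y^\alpha-1,\,x^\beta y^\gamma-1\rangle},
\end{equation}
a finite-dimensional $\ZZ_2$-vector space; write $d'$ for its dimension, so there are $2^{d'}$ such anyons. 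The same argument with $\overline f,\overline g$ yields an isomorphic space of $e$-type anyons. As in the discussion preceding Theorem~\ref{thm: maximal logical dimension}, the $e$ and $m$ sectors pair up so that the code is a direct sum of $d'$ Kitaev toric codes (plus product states), giving $|\mathcal A|=2^{d'}\cdot 2^{d'}=2^{2d'}$; by Eq.~\eqref{eq: gsd and anyon number}, $\mathrm{GSD}_{T^2}=2^{2d'}=2^{k}$, i.e.\ $k=2d'$, which is Eq.~\eqref{eq: k formula}.

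The hard part will be justifying the claim glossed above, that descending to the finite torus introduces no trivial-anyon relations beyond $y^\alpha-1$ and $x^\beta y^\gamma-1$: one must check that no local operator on the torus creates an $m$-configuration outside the image of $\langle f,g\rangle$ in $R_\Lambda$, and dually that two infinite-plane sectors coincide on the torus only when their difference already lies in that ideal. This is the torus analogue of the topological-order condition $\ker\epsilon=\mathcal S$, and the cleanest route is to observe that the finite-depth Clifford circuit trivializing the translation-invariant code into a direct sum of toric codes and product states~\cite{haah_module_13, haah2016algebraic, Chen2023Equivalence} commutes with $\ZZ^2$ and therefore descends to the quotient torus $\ZZ^2/\Lambda$, whence the GSD is read off from that normal form. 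A minor point to record is that $\vec a_1,\vec a_2$ must be linearly independent for $R_\Lambda$ to be finite dimensional, i.e.\ $\alpha\beta\neq0$, with the number of physical qubits being $n=2|\alpha\beta|$.
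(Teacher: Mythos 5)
Your proposal is correct and follows essentially the same route as the paper, which presents Theorem~\ref{thm: k on twisted torus} as the direct adaptation of the infinite-plane anyon count (Theorem~\ref{thm: maximal logical dimension}) after adjoining the two binomial relations $y^\alpha-1$ and $x^\beta y^\gamma-1$ encoding the twisted identifications, then pairing $e$- and $m$-sectors and invoking $\mathrm{GSD}_{T^2}=|\mathcal{A}|$. Your explicit flagging and treatment of the descent issue (that compactification introduces no relations beyond the two binomials) is a point the paper leaves implicit, so your write-up is, if anything, slightly more careful.
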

{\change
Theorem~\ref{thm: k on twisted torus} extends Corollary 4.5 of Ref.~\cite{haah_module_13} to twisted tori.}
This theorem allows us to compute the code parameters without the need to construct large parity-check matrices, whose rank computation is typically costly, in contrast to the more efficient Gr\"obner basis method described above.

Next, we consider several applications of this theorem.
\begin{example}
{\bf $\mathbf{[[144, 12, 12]]}$ code.} We consider the previous Example~\ref{example: -1 -3 3 -1 generalized TC on infinite plane} of the $(3,-3)$-BB code on a (untwisted) $12 \times 6$ torus.
According to Theorem~\ref{thm: k on twisted torus}, we compute the Gr\"obner basis of the ideal
\begin{equation}
    \langle x + x^2 + y^{-3}, ~y + y^2 + x^{3},
    ~x^{12}-1, ~y^6-1 \rangle,
\end{equation}
and verify that it can be generated by the following polynomials:
\begin{eqs}
    h(x,y) &= 1 + y^2 + y^4, \\
    i(x,y) &= 1 + x + x y + x y^2 + y^3, \\
    j(x,y) &= x + x^2 + y^3.
\label{eq: G basis of 144 12 12}
\end{eqs}
From the polynomials, we can identify $6$ independent monomials:  
\begin{equation}
    1, y, y^2, y^3, x, xy,
\end{equation}
implying $k = 12$.
We can compute the code distance $d$ using either the syndrome matching algorithm~\cite{chen2022error}, the integer programming approach~\cite{landahl2011fault}, or the probabilistic algorithm~\cite{Pryadko2022GAP}.
This code becomes the $[[144, 12, 12]]$ example in Table~\ref{tab: n_k_d 1}.
\label{example: 2 3 -3 2 generalized TC on 12x6 torus}
\end{example}

We can place this $(3,-3)$-BB code on a $12 \times 12$ torus instead of the $12 \times 6$ torus, resulting in the $[[288, 12, 18]]$ code as shown in the example below. It is important to note that our $[[288, 12, 18]]$ code differs from the $[[288, 12, 18]]$ code in Ref.~\cite{Bravyi2024HighThreshold}, which uses $f(x, y) = x^3 + y^2 + y^7$ and $g(x, y) = x + x^2 + y^3$. The stabilizers in Example~\ref{example: -1 -3 3 -1 generalized TC on infinite plane} are more local, which should provide an advantage for physical implementation.
\begin{example}
{\bf $\mathbf{[[288, 12, 18]]}$ code.}
We revisit Example~\ref{example: -1 -3 3 -1 generalized TC on infinite plane} of the $(3,-3)$-BB code placed on a $12 \times 12$ torus.
We compute the Gr\"obner basis of the ideal
\begin{equation}
    \langle x + x^2 + y^{-3}, ~y + y^2 + x^{3}, ~x^{12}-1, ~y^{12} -1 \rangle,
\end{equation}
and find that it yields the same result as in Eq.~\eqref{eq: G basis of 144 12 12}. Therefore, the logical dimension is $k = 12$. Finally, we compute the code distance $d$ and confirm that this gives the $[[288, 12, 18]]$ code presented in Table~\ref{tab: n_k_d 2}.
\label{example: [[288, 12, 18]] generalized TC on twisted torus}
\end{example}

Moreover, we can put the same stabilizers on a twisted torus and obtain a different code:
\begin{example}
{\bf $\mathbf{[[270, 8, 20]]}$ code.}
We revisit Example~\ref{example: -1 -3 3 -1 generalized TC on infinite plane} of the $(3,-3)$-BB code, now defined on a twisted $9 \times 15$ torus with basis vectors $\vec{a}_1 = (0,15)$ and $\vec{a}_2 = (9,6)$
We then compute the Gr\"obner basis of the ideal
\begin{equation}
    \langle x + x^2 + y^{-3}, ~y + y^2 + x^{3}, ~x^9 y^{6} -1, ~y^{15} -1 \rangle.
\end{equation}
and find that it is generated by the following polynomials:
\begin{eqs}
    h(x,y) &= 1 + y + y^2, \\
    i(x,y) &= 1 + x + x^2.
\end{eqs}
From this, we identify the $4$ independent monomials:
\begin{equation}
    1, x, y, xy.
\end{equation}
This confirms that the code has $k = 8$ logical qubits. Additionally, we compute the code distance $d$ on the twisted torus and verify that this corresponds to the $[[270, 8, 20]]$ example listed in Table~\ref{tab: n_k_d 2}.
\label{example: [[270, 8, 20]] generalized TC on twisted torus}
\end{example}
We found that the $(3,-3)$-BB code in Example~\ref{example: -1 -3 3 -1 generalized TC on infinite plane} generates optimal codes on various twisted tori, including
\begin{eqs}
    &[[90, 8, 10]], [[108, 8, 10]], [[144, 12, 12]], [[162, 8, 14]], \\
    &[[168, 8, 14]], [[180, 8, 16]], [[234, 8, 18]], [[270, 8, 20]], \\
    &[[288, 12, 18]], [[306, 8, \leq 22]], [[324, 8, \leq 22]], [[342, 8, \leq 22]], \\
    &[[396,8,\leq 26]].
\label{eq: list of 3 -3 BB codes}
\end{eqs}
The corresponding lattice details are provided in Tables~\ref{tab: n_k_d 1}, \ref{tab: n_k_d 2}, \ref{tab: n_k_d 3}, and~\ref{tab: n_k_d 4}.

Similarly, we observe that the $(3,3)$-BB code in Example~\ref{example: -1 3 3 -1 generalized TC on infinite plane} also generates another large family of codes on finite tori, such as
\begin{eqs}
    &[[72, 8, 8]], [[108, 8, 10]], [[144, 12, 12]], [[162, 8, 14]], \\
    &[[180, 8, 16]], [[192, 8, 16]], [[234, 8, 18]], [[270, 8, 20]], \\
    &[[282, 4, \leq 24]], [[360, 12, \leq 24]].
\label{eq: list of 3 3 BB codes}
\end{eqs}

\begin{center}
\textit{Logical operators for generic $k$}
\end{center}

We now turn to the frustrated-$k$ regime ($k<k_{\mathrm{max}}$). For a complete mathematical derivation and its interpretation from algebraic geometry, see Ref.~\cite{chen2025anyon}.

As in Eq.~\eqref{eq: m-string-anyon}, we seek a string operator that transports the anyon $v_m= [0,a(x,y)]$ along the $(0,\alpha)$ cycle. However, when $(0,\alpha)$ is not compatible with the intrinsic anyon periodicity, a solution may not exist. To address this, we relax the anyon equation to
\begin{eqs}
  &(y^\alpha - 1)\,a(x,y) + (x^\beta y^\gamma - 1)\,b(x,y) \\
  &= p_1(x,y)\,g(x,y) + p_2(x,y)\,f(x,y),
\label{eq: frustrated-anyon-equation}
\end{eqs}
where the logical operator is now characterized by a pair of anyons, $[0, a(x,y)]$ and $[0, b(x,y)]$, moving along the $(0, \alpha)$ and $(\beta, \gamma)$ directions, respectively. Eq.~\eqref{eq: frustrated-anyon-equation} can be solved locally around strips along the $(0,\alpha)$ and $(\beta,\gamma)$ segments using the modified Gaussian elimination algorithm developed in Ref.~\cite{liang2023extracting}. This method has a computational complexity of $O((\alpha+\beta+\gamma)^3)$, which is faster than solving for logical operators via the parity-check matrix, whose complexity is $O(n^3) = O((\alpha\beta)^3)$.

\section{Applications for qLDPC code constructions}

Using Theorem~\ref{thm: k on twisted torus}, we systematically searched for all generalized toric codes in Eq.~\eqref{eq: a b c d generalized toric code} with twisted periodic boundary conditions for $n \leq 400$. For each even $n$, we first identified all decompositions of the form $n = 2l \times m$ and defined the twisted torus using the basis vectors $\vec{a}_1 = (0, m)$ and $\vec{a}_2 = (l, q)$, where $0 \leq q < m$. Next, we enumerated all polynomials\footnote{In principle, our search can be extended to more general polynomials of the form
\begin{eqs}
    f(x,y) &= 1 + x^{a_1} y^{b_1} + x^{a_2} y^{b_2}, \\
    g(x,y) &= 1 + x^{c_1} y^{d_1} + x^{c_2} y^{d_2}.
\end{eqs}
However, tests for small values of $n \leq 108$ indicate that these more general BB codes do not yield better $[[n, k, d]]$ parameters on twisted tori. Therefore, we restricted our search to the generalized toric codes defined in Eq.~\eqref{eq: a b c d generalized toric code} for simplicity.}
\begin{eqs}
    f(x,y) &= 1 + x + x^a y^b, \\
    g(x,y) &= 1 + y + x^c y^d,
\end{eqs}
with the exponent pairs $(a,b)$ and $(c,d)$ lying within the parallelogram spanned by $\vec{a}_1$ and $\vec{a}_2$. We then computed the corresponding $[[n, k, d]]$ parameters using the methods described in Example~\ref{example: 2 3 -3 2 generalized TC on 12x6 torus}.

Note that the computation of $k$ in Eq.~\eqref{eq: k formula} is not particularly sensitive to the system size because we can reduce $y^\alpha - 1$ and $x^\beta y^\gamma - 1$ modulo $f(x,y)$ and $g(x,y)$, retaining only the remainders.
In contrast to the conventional approach of computing $k$ via the rank of the parity-check matrix~\cite{Calderbank1996Good}---whose size scales with $n$---our method applies Gaussian elimination to a set of monomials whose range is bounded by $O(k)$, as shown in Eq.~\eqref{eq: range of monomials}.
During our search, in over $90\%$ of cases the Gr\"obner basis computation immediately yields $\langle 1 \rangle$, implying that $k=0$. Consequently, constructing parity-check matrices is unnecessary in these instances, significantly reducing the overall computational workload.
As a result, our approach requires substantially fewer computational resources for large $n$, allowing us to systematically explore codes up to $n = 400$. All computations were performed on a standard personal computer, demonstrating that the required computational resources are modest.

The optimal results for each $n$ are summarized in Tables~\ref{tab: n_k_d 1}, \ref{tab: n_k_d 2}, \ref{tab: n_k_d 3}, and~\ref{tab: n_k_d 4}. Among the various choices of $f(x,y)$, $g(x,y)$, $\vec{a}_1$, and $\vec{a}_2$ that yield the same $[[n, k, d]]$ parameters, we selected the one with the most local stabilizers. Although several $[[n, k, d]]$ codes have been reported in the literature~\cite{Bravyi2024HighThreshold, tiew2024low, wang2024coprime, eberhardt2024logical}, the codes presented in our tables exhibit improved locality on twisted tori, with the degrees of the polynomials $f(x,y)$ and $g(x,y)$ being lower than those in previous constructions.
{\change
For codes with $d \leq 20$, the code distance can be computed exactly; for codes with larger $d$, we employed a probabilistic algorithm with sufficient runtime to obtain an upper bound that we believe to be tight.}
Since each $[[n, k, d]]$ code in Tables~\ref{tab: n_k_d 1}, \ref{tab: n_k_d 2}, \ref{tab: n_k_d 3}, and~\ref{tab: n_k_d 4} usually has dozens or even hundreds of solutions, we are confident that the reported parameters accurately reflect the optimal generalized toric codes.

\subsection{Novel $[[n, k, d]]$ codes}

The novel codes are listed in Tables~\ref{tab: n_k_d 1}, \ref{tab: n_k_d 2}, \ref{tab: n_k_d 3}, and \ref{tab: n_k_d 4}, with the parameters $[[n, k, d]]$ presented in bold. For small $n$, the code
\begin{equation}
    [[120, 8, 12]]: \quad \frac{kd^2}{n} = 9.6,
\end{equation}
 appears to be absent in the existing literature (to our best knowledge).
This index follows from the Bravyi-Poulin-Terhal (BPT) bound, which states that any two-dimensional geometrically local quantum code must satisfy~\cite{bravyi2009no, Bravyi2010Tradeoffs}:
\begin{equation}
    kd^2 = O(n).
\end{equation}
For instance, the Kitaev toric code on a $L \times L$ torus scales as
\begin{equation}
[[n, k, d]] = [[2L^2, 2, L]]: \quad \frac{kd^2}{n} = 1.
\end{equation}
Thus, the value of $kd^2/n$ serves as a measure of the performance of a two-dimensional quantum code compared to the Kitaev toric code.
The code $[[120, 8, 12]]$ is currently the best known example below $n = 144$, at which the gross code $[[144, 12, 12]]$ was previously proposed~\cite{Bravyi2024HighThreshold}.

Other notable examples include the following codes:
\begin{eqs}
    [[254, 14, 16]]:& \quad \frac{kd^2}{n} = 14.11, \\
    [[294, 10, 20]]:& \quad \frac{kd^2}{n} = 13.61. \\
\end{eqs}
These codes surpass the previously best-reported weight-6 code $[[288, 12, 18]]$ for comparable system sizes, which achieves $kd^2/n = 13.5$~\cite{Bravyi2024HighThreshold}.

Another noteworthy example is the code:
\begin{equation}
    [[310, 10, 22]]:\quad \frac{kd^2}{n} = 15.61,
\end{equation}
presented in Table~\ref{tab: n_k_d 4} around $n \approx 300$ physical qubits. Its stabilizers are given by
\begin{eqs}
    f(x,y) &= 1 + x + x^3 y^2, \\
    g(x,y) &= 1 + y + x^{-4} y^4,
\end{eqs}
and are implemented on a twisted torus characterized by lattice vectors $\vec{a}_1 = (0,31)$ and $\vec{a}_2 = (5, 11)$. We observe that optimal $[[n, k, d]]$ codes for each given $n$ are frequently realized on twisted tori.

\subsection{Improved locality of stabilizers in comparison to previous constructions}

As discussed before Example~\ref{example: [[288, 12, 18]] generalized TC on twisted torus}, our construction is more localized than previous ones in the literature. In this section, we focus on another important example:
\begin{equation}
    [[360, 12, 24]]: \quad \frac{kd^2}{n} = 19.2.
\end{equation}
This code was first proposed in Ref.~\cite{Bravyi2024HighThreshold} using the polynomials
\begin{eqs}
    f(x,y) &= x^{25} + x^{26} + y^3, \\
    g(x,y) &= y + y^2 + x^9,
\end{eqs}
and implemented on an untwisted $30 \times 6$ torus. The stabilizers have a range of $9$ in the $x$-direction (since $x^{25}$ and $x^{26}$ can be equivalently treated as $x^{-5}$ and $x^{-4}$).

In contrast, our $[[360, 12, 24]]$ code is simply the $(3,3)$-BB code (Example~\ref{example: -1 3 3 -1 generalized TC on infinite plane}), specified by the following polynomials:
\begin{eqs}
    f(x, y) &= x + x^2 + y^3, \\
    g(x, y) &= y + y^2 + x^3,
\end{eqs}
placed on a twisted $6 \times 30$ torus with lattice vectors $\vec{a}_1 = (0, 30)$ and $\vec{a}_2 = (6, 6)$.
For physical realization, once we have the architecture for the stabilizers of the $(3,3)$-BB code, we can generate optimal generalized toric codes on various lattices, as listed in Eq.~\eqref{eq: list of 3 3 BB codes}.

Similarly, the physical construction of the $(3,-3)$-BB code (Example~\ref{example: -1 -3 3 -1 generalized TC on infinite plane}) can generate the quantum LDPC codes listed in Eq.~\eqref{eq: list of 3 -3 BB codes}. By comparing the stabilizers in Tables~\ref{tab: n_k_d 1}, \ref{tab: n_k_d 2}, \ref{tab: n_k_d 3}, and~\ref{tab: n_k_d 4} with those in the literature, we observe that twisted tori generally reduce the range of stabilizers, making experimental realization more feasible.

\subsection{Relation to one-dimensional generalized bicycle codes}

We present another example from Table~\ref{tab: n_k_d 3}, the $[[254, 14, 16]]$ code, which achieves $kd^2/n = 14.11$. This code is defined on a twisted $1 \times 127$ torus with lattice vectors $\vec{a}_1 = (0, 127)$ and $\vec{a}_2 = (1, 25)$. The associated polynomials are:
\begin{eqs}
    f(x,y) &= 1 + x + x^{-1}y^{-3}, \\
    g(x,y) &= 1 + y + y^{-6}.
\end{eqs}
The code is local on the twisted torus, as the range of each stabilizer is small relative to the total system size $n$. Since the twisted torus is narrow in the $x$-direction, we can remove the $x$-direction periodicity by using the polynomial $xy^{25} - 1$ to cancel the $x$-dependence. Therefore, we can reduce the code to a non-local one-dimensional quantum code. This transformation yields the following polynomials:
\begin{eqs}
    f(y) &= 1 + y^{22} + y^{102}, \\
    g(y) &= 1 + y + y^{121},
\label{eq: 254 14 16 GB code}
\end{eqs}
with a periodic boundary condition $y^{127} - 1 = 0$.
In these one-dimensional codes, the Gr\"obner basis in Theorem~\ref{thm: k on twisted torus} reduces to the $\gcd$ (greatest common divisor) for univariate polynomials, simplifying to the following expression:
\begin{eqs}
    k &= 2 \dim\left(
        \frac{\mathbb{Z}_2[y, y^{-1}]}{\langle f(y),~g(y),~y^l - 1 \rangle}
        \right) \\
    &= 2 \deg\left(\gcd \left(f(y),~g(y),~y^l - 1 \right)\right),
\label{eq: k of 1d GB codes}
\end{eqs}
which precisely matches Proposition 1 in Ref.~\cite{panteleev2021degenerate}, which computes the logical dimension of the generalized bicycle (GB) codes~\cite{Kovalev2013QuantumKronecker, Pryadko2022DistanceGB}.
For each value of $n$, we can apply the same procedure to the generalized toric codes on the twisted $1 \times \frac{n}{2}$ tori:
\begin{equation}
    \vec{a}_1 = (0, \frac{n}{2}), \quad \vec{a}_2 = (1, \gamma), \quad \text{with } 0 \leq \gamma < \frac{n}{2},
\end{equation}
to induce the corresponding one-dimensional generalized bicycle codes. The results are summarized in Table~\ref{tab: n_k_d 5}, \ref{tab: n_k_d 6}, \ref{tab: n_k_d 7}, and~\ref{tab: n_k_d 8}.

For comparison, consider the GB code described in Ref.~\cite{panteleev2021degenerate}. The polynomials for this GB code are:
\begin{eqs}
    f(y) &= 1 + y^{15} + y^{20} + y^{28} + y^{66}, \\
    f(y) &= 1 + y^{58} + y^{59} + y^{100} + y^{121},
\label{eq: 254 28 14_20 GB code}
\end{eqs}
defined on a cycle of length $l = 127$. This GB code uses weight-10 stabilizers to achieve better code parameters $[[254, 28, 14 \leq d \leq 20]]$.

\renewcommand{\arraystretch}{1.2}
\begin{table}[t]
\centering
\begin{tabular}{|c|c|c|c|}
\hline
$[[n,k,d]]$ & $f(y)$
& $g(y)$ & $l$     

\\ \hline
~$[[{\color{red}12,4,2}]]$~ & ~$1+y+y^{2}$~ & ~$1+y+y^{2}$~ &
~$6$~
\\ \hline
 
~$[[{\color{red}14,6,2}]]$~ & ~$1+y+y^{3}$~ & ~$1+y+y^{3}$~ &
~$7$~
\\ \hline
 
~$[[{\color{red}18,4,4}]]$~ & ~$1+y^{2}+y^{4}$~ & ~$1+y+y^{2}$~ &
~$9$~
\\ \hline
 
~$[[{\color{red}24,4,4}]]$~ & ~$1+y^{2}+y^{4}$~ & ~$1+y+y^{2}$~ &
~$12$~
\\ \hline
 
~$[[{\color{red}28,6,4}]]$~ & ~$1+y^{2}+y^{3}$~ & ~$1+y+y^{5}$~ &
~$14$~
\\ \hline
 
~$[[30,8,4]]$~ & ~$1+y^{2}+y^{8}$~ & ~$1+y+y^{4}$~ &
~$15$~
\\ \hline
 
~$[[{\color{red}36,4,6}]]$~ & ~$1+y^{2}+y^{4}$~ & ~$1+y+y^{5}$~ &
~$18$~
\\ \hline
 
~$[[42,10,4]]$~ & ~$1+y^{2}+y^{10}$~ & ~$1+y+y^{5}$~ &
~$21$~
\\ \hline
 
~$[[{\color{red}48,4,8}]]$~ & ~$1+y^{5}+y^{7}$~ & ~$1+y+y^{5}$~ &
~$24$~
\\ \hline
 
~$[[54,4,8]]$~ & ~$1+y^{5}+y^{7}$~ & ~$1+y+y^{5}$~ &
~$27$~
\\ \hline
 
~$[[{\color{red}56,6,8}]]$~ & ~$1+y^{3}+y^{9}$~ & ~$1+y+y^{5}$~ &
~$28$~
\\ \hline
 
~$[[{\color{red}60,8,6}]]$~ & ~$1+y^{7}+y^{9}$~ & ~$1+y+y^{4}$~ &
~$30$~
\\ \hline
 
~$[[{\color{red}62,10,6}]]$~ & ~$1+y^{3}+y^{8}$~ & ~$1+y+y^{12}$~ &
~$31$~
\\ \hline
 
~$[[{\color{red}66,4,10}]]$~ & ~$1+y^{2}+y^{7}$~ & ~$1+y+y^{11}$~ &
~$33$~
\\ \hline
 
~$[[{\color{red}70,6,8}]]$~ & ~$1+y^{3}+y^{9}$~ & ~$1+y+y^{5}$~ &
~$35$~
\\ \hline
 
~$[[72,4,10]]$~ & ~$1+y^{2}+y^{7}$~ & ~$1+y+y^{11}$~ &
~$36$~
\\ \hline
 
~$[[{\color{red}78,4,10}]]$~ & ~$1+y^{5}+y^{7}$~ & ~$1+y+y^{8}$~ &
~$39$~
\\ \hline
 
~$[[84,10,6]]$~ & ~$1+y^{11}+y^{13}$~ & ~$1+y+y^{5}$~ &
~$42$~
\\ \hline
 
~$[[90,8,8]]$~ & ~$1+y^{2}+y^{9}$~ & ~$1+y+y^{12}$~ &
~$45$~
\\ \hline
 
~$[[{\color{red}96,4,12}]]$~ & ~$1+y^{5}+y^{7}$~ & ~$1+y+y^{11}$~ &
~$48$~
\\ \hline
 
~$[[{\color{red}98,6,12}]]$~ & ~$1+y^{4}+y^{12}$~ & ~$1+y+y^{10}$~ &
~$49$~
\\ \hline
 
~$[[{\color{red}102,4,12}]]$~ & ~$1+y^{4}+y^{8}$~ & ~$1+y+y^{11}$~ &
~$51$~
\\ \hline
 
~$[[108,4,12]]$~ & ~$1+y^{8}+y^{10}$~ & ~$1+y+y^{8}$~ &
~$54$~
\\ \hline

\end{tabular}
\caption{One-dimensional generalized bicycle codes induced from generalized toric codes on twisted $1 \times \frac{n}{2}$ tori for $ n \leq 110$.
The code is defined on a circle of length $l = \frac{n}{2}$, with two qubits per unit cell. The red notation $[[{\color{red} n, k, d}]]$ indicates that the code parameters are identical to those of the optimal generalized toric code in two dimensions.}
\label{tab: n_k_d 5}
\end{table}

\renewcommand{\arraystretch}{1.2}
\begin{table}[t]
\centering
\begin{tabular}{|c|c|c|c|}
\hline
$[[n,k,d]]$ & $f(y)$
& $g(y)$ & $l$     

\\ \hline

~$[[{\color{red}112,6,12}]]$~ & ~$1+y^{3}+y^{15}$~ & ~$1+y+y^{10}$~ &
~$56$~
\\ \hline
 
~$[[{\color{red}114,4,14}]]$~ & ~$1+y^{8}+y^{13}$~ & ~$1+y+y^{11}$~ &
~$57$~
\\ \hline
 
~$[[{\color{red}120,8,12}]]$~ & ~$1+y^{8}+y^{21}$~ & ~$1+y+y^{12}$~ &
~$60$~
\\ \hline
 
~$[[{\color{red}124,10,10}]]$~ & ~$1+y^{8}+y^{11}$~ & ~$1+y+y^{13}$~ &
~$62$~
\\ \hline
 
~$[[{\color{red}126,12,10}]]$~ & ~$1+y^{12}+y^{23}$~ & ~$1+y+y^{8}$~ &
~$63$~
\\ \hline
 
~$[[{\color{red}132,4,14}]]$~ & ~$1+y^{4}+y^{14}$~ & ~$1+y+y^{14}$~ &
~$66$~
\\ \hline
 
~$[[{\color{red}138,4,14}]]$~ & ~$1+y^{8}+y^{13}$~ & ~$1+y+y^{8}$~ &
~$69$~
\\ \hline
 
~$[[{\color{red}140,6,14}]]$~ & ~$1+y^{10}+y^{16}$~ & ~$1+y+y^{12}$~ &
~$70$~
\\ \hline

~$[[144,4,16]]$~ & ~$1+y^{23}+y^{28}$~ & ~$1+y+y^{20}$~ &
~$72$~
\\ \hline
 
~$[[{\color{red}146,18,4}]]$~ & ~$1+y^{2}+y^{18}$~ & ~$1+y+y^{9}$~ &
~$73$~
\\ \hline
 
~$[[{\color{red}150,8,12}]]$~ & ~$1+y^{2}+y^{8}$~ & ~$1+y+y^{19}$~ &
~$75$~
\\ \hline
 
~$[[{\color{red}154,6,16}]]$~ & ~$1+y^{4}+y^{34}$~ & ~$1+y+y^{19}$~ &
~$77$~
\\ \hline
 
~$[[{\color{red}156,4,16}]]$~ & ~$1+y^{11}+y^{16}$~ & ~$1+y+y^{14}$~ &
~$78$~
\\ \hline
 
~$[[162,4,16]]$~ & ~$1+y^{7}+y^{11}$~ & ~$1+y+y^{14}$~ &
~$81$~
\\ \hline
 
~$[[168,10,12]]$~ & ~$1+y^{11}+y^{19}$~ & ~$1+y+y^{17}$~ &
~$84$~
\\ \hline
 
~$[[{\color{red}170,16,10}]]$~ & ~$1+y^{21}+y^{25}$~ & ~$1+y+y^{16}$~ &
~$85$~
\\ \hline

~$[[{\color{red}174,4,18}]]$~ & ~$1+y^{7}+y^{11}$~ & ~$1+y+y^{17}$~ &
~$87$~
\\ \hline
 
~$[[{\color{red}180,8,16}]]$~ & ~$1+y^{8}+y^{47}$~ & ~$1+y+y^{34}$~ &
~$90$~
\\ \hline
 
~$[[{\color{red}182,6,18}]]$~ & ~$1+y^{9}+y^{13}$~ & ~$1+y+y^{38}$~ &
~$91$~
\\ \hline
 
~$[[186,14,10]]$~ & ~$1+y^{8}+y^{19}$~ & ~$1+y+y^{14}$~ &
~$93$~
\\ \hline
 
~$[[192,4,18]]$~ & ~$1+y^{11}+y^{16}$~ & ~$1+y+y^{14}$~ &
~$96$~
\\ \hline
 
~$[[{\color{red}196,6,18}]]$~ & ~$1+y^{12}+y^{22}$~ & ~$1+y+y^{19}$~ &
~$98$~
\\ \hline

\end{tabular}
\caption{Continuation of Table~\ref{tab: n_k_d 5} for $110 < n \leq 196$.}
\label{tab: n_k_d 6}
\end{table}

\section{Discussion and future directions}

We have introduced a topological order perspective to studying quantum error-correcting codes on tori. 
From the algebraic structure of anyons, the logical dimension $k$ can be determined by counting independent anyon types. We showed that this corresponds to the dimension of the quotient ring $R/I$, where the ideal $I = \langle f(x,y),~g(x,y) \rangle$ is generated by the stabilizers.
This provides a systematic approach to characterizing the code space.
Our framework naturally incorporates (twisted) periodic boundary conditions, enabling the construction and characterization of new quantum LDPC codes. To ensure computational feasibility, we employed Gr\"obner basis techniques, enabling a systematic analysis of generalized toric codes up to $n \leq 400$ physical qubits.
The versatility of our method is reflected in the discovery of novel qLDPC codes listed in Tables~\ref{tab: n_k_d 1}, \ref{tab: n_k_d 2}, \ref{tab: n_k_d 3}, and~\ref{tab: n_k_d 4}. These results illustrate the power of a ring-theoretic approach in advancing the understanding of topological quantum codes, paving the way for future explorations in both theory and practical implementation.

Future work could extend this investigation to larger system sizes (higher $n$), as these may yield improved codes.
Given that our search algorithm is fully parallelizable, supercomputers or computer clusters could be employed to examine all generalized toric codes within $n \leq 500$ or higher---scales that are comparable to the number of physical qubits in state-of-the-art experimental platforms~\cite{Google2019supremacy, Pan2020Quantum, Lukin2021programmable, Madsen2022Quantum, Bravyi2022future, Lukin2025Quantum}. The primary bottleneck, however, is the computation of code distances. When $n$ reaches a few hundred and $d$ exceeds 20, the probabilistic algorithm for computing the code distance may not be reliable and could only yield an upper bound for $d$.

Alternatively, one could explore different forms of polynomials. For example, Ref.~\cite{wang2024coprime} considered
\begin{eqs}
    f(x,y) &= 1 + (xy)^{a'} + (xy)^{b'}, \\
    g(x,y) &= 1 + (xy)^{c'} + (xy)^{d'},
\end{eqs}
treating $\pi := xy$ as a single variable. However, since the exponents $a'$, $b'$, $c'$, and $d'$ range from $0$ to $n$---a considerably larger interval than that in Eq.~\eqref{eq: a b c d generalized toric code}---the required computational resources are substantially higher. It would be interesting to compare the resulting code parameters with those presented in this work. Additionally, one could increase the weights of stabilizers, which could generate improved $[[n, k, d]]$ parameters, as demonstrated in Eq.~\eqref{eq: 254 28 14_20 GB code} and previous constructions of quantum LDPC codes~\cite{Kovalev2013QuantumKronecker, breuckmann2021balanced, panteleev2021degenerate, Lin2022c3Locally, Leverrier2022Tanner, Panteleev2022goodqldpc, Dinur2023Good, wang2023abelian, Lin2024Quantumtwoblock, tiew2024low, Wills2024Localtestability, eberhardt2024logical, Wills2025Tradeoff}.

Finally, it is important to investigate whether the codes presented in Tables~\ref{tab: n_k_d 1}, \ref{tab: n_k_d 2}, \ref{tab: n_k_d 3}, and~\ref{tab: n_k_d 4} can achieve comparable error suppression in the context of the circuit-based noise model, as discussed in Ref.~\cite{Bravyi2024HighThreshold}. In particular, optimizing the depth of the syndrome measurement circuit is essential since the effective circuit‐level distance is typically smaller than the nominal code distance. We plan to carry out numerical simulations to estimate the pseudo‐thresholds of these codes.
Furthermore, physically realizing these codes, such as via bilayer superconducting-qubit architectures, and developing efficient logical-gate implementations are critical steps toward their deployment as practical quantum LDPC codes.

\renewcommand{\arraystretch}{1.2}
\begin{table}[t]
\centering
\begin{tabular}{|c|c|c|c|}
\hline
$[[n,k,d]]$ & $f(y)$
& $g(y)$ & $l$     

\\ \hline

~$[[198,4,18]]$~ & ~$1+y^{11}+y^{16}$~ & ~$1+y+y^{14}$~ &
~$99$~
\\ \hline
 
~$[[{\color{red}204,4,20}]]$~ & ~$1+y^{16}+y^{35}$~ & ~$1+y+y^{11}$~ &
~$102$~
\\ \hline
 
~$[[210,14,12]]$~ & ~$1+y^{11}+y^{27}$~ & ~$1+y+y^{19}$~ &
~$105$~
\\ \hline
 
~$[[216,4,20]]$~ & ~$1+y^{14}+y^{22}$~ & ~$1+y+y^{20}$~ &
~$108$~
\\ \hline
 
~$[[{\color{red}222,4,20}]]$~ & ~$1+y^{10}+y^{14}$~ & ~$1+y+y^{20}$~ &
~$111$~
\\ \hline
 
~$[[{\color{red}224,6,20}]]$~ & ~$1+y^{3}+y^{22}$~ & ~$1+y+y^{31}$~ &
~$112$~
\\ \hline
 
~$[[{\color{red}228,4,20}]]$~ & ~$1+y^{7}+y^{17}$~ & ~$1+y+y^{20}$~ &
~$114$~
\\ \hline
 
~$[[234,4,\leq 22]]$~ & ~$1+y^{13}+y^{29}$~ & ~$1+y+y^{20}$~ &
~$117$~
\\ \hline
 
~$[[{\color{red}238,6,20}]]$~ & ~$1+y^{9}+y^{20}$~ & ~$1+y+y^{24}$~ &
~$119$~
\\ \hline
 
~$[[{\color{red}240,8,18}]]$~ & ~$1+y^{13}+y^{21}$~ & ~$1+y+y^{19}$~ &
~$120$~
\\ \hline
 
~$[[{\color{red}246,4,\leq 22}]]$~ & ~$1+y^{13}+y^{20}$~ & ~$1+y+y^{23}$~ &
~$123$~
\\ \hline
 
~$[[{\color{red}248,10,18}]]$~ & ~$1+y^{17}+y^{27}$~ & ~$1+y+y^{13}$~ &
~$124$~
\\ \hline
 
~$[[{\color{red}252,12,16}]]$~ & ~$1+y^{25}+y^{30}$~ & ~$1+y+y^{8}$~ &
~$126$~
\\ \hline
 
~$[[{\color{red}254,14,16}]]$~ & ~$1+y^{10}+y^{37}$~ & ~$1+y+y^{31}$~ &
~$127$~
\\ \hline
 
~$[[{\color{red}258,4,\leq 22}]]$~ & ~$1+y^{14}+y^{19}$~ & ~$1+y+y^{14}$~ &
~$129$~
\\ \hline
 
~$[[264,4,\leq 22]]$~ & ~$1+y^{13}+y^{20}$~ & ~$1+y+y^{17}$~ &
~$132$~
\\ \hline

~$[[{\color{red}266,6,\leq 22}]]$~ & ~$1+y^{12}+y^{25}$~ & ~$1+y+y^{17}$~ &
~$133$~
\\ \hline

~$[[{\color{red}270,8,20}]]$~ & ~$1+y^{6}+y^{23}$~ & ~$1+y+y^{27}$~ &
~$135$~
\\ \hline
 
~$[[{\color{red}276,4,\leq 24}]]$~ & ~$1+y^{8}+y^{31}$~ & ~$1+y+y^{20}$~ &
~$138$~
\\ \hline
 
~$[[{\color{red}280,6,\leq 22}]]$~ & ~$1+y^{20}+y^{23}$~ & ~$1+y+y^{17}$~ &
~$140$~
\\ \hline
 
~$[[{\color{red}282,4,\leq 24}]]$~ & ~$1+y^{10}+y^{17}$~ & ~$1+y+y^{23}$~ &
~$141$~
\\ \hline
 
~$[[288,4,\leq 24]]$~ & ~$1+y^{20}+y^{25}$~ & ~$1+y+y^{14}$~ &
~$144$~
\\ \hline
 
~$[[{\color{red}292,18,8}]]$~ & ~$1+y^{4}+y^{36}$~ & ~$1+y+y^{9}$~ &
~$146$~
\\ \hline
 
\end{tabular}
\caption{Continuation of Table~\ref{tab: n_k_d 6} for $196 < n \leq 292$.}
\label{tab: n_k_d 7}
\end{table}

\renewcommand{\arraystretch}{1.2}
\begin{table}[t]
\centering
\begin{tabular}{|c|c|c|c|}
\hline
$[[n,k,d]]$ & $f(y)$
& $g(y)$ & $l$     

\\ \hline

~$[[{\color{red}294,10,20}]]$~ & ~$1+y^{19}+y^{29}$~ & ~$1+y+y^{26}$~ &
~$147$~
\\ \hline
 
~$[[{\color{red}300,8,\leq 22}]]$~ & ~$1+y^{43}+y^{52}$~ & ~$1+y+y^{57}$~ &
~$150$~
\\ \hline
 
~$[[306,4,\leq 24]]$~ & ~$1+y^{8}+y^{22}$~ & ~$1+y+y^{20}$~ &
~$153$~
\\ \hline
 
~$[[{\color{red}308,6,\leq 24}]]$~ & ~$1+y^{12}+y^{22}$~ & ~$1+y+y^{26}$~ &
~$154$~
\\ \hline
 
~$[[{\color{red}310,10,\leq 22}]]$~ & ~$1+y^{20}+y^{43}$~ & ~$1+y+y^{14}$~ &
~$155$~
\\ \hline
 
~$[[312,4,\leq 24]]$~ & ~$1+y^{10}+y^{17}$~ & ~$1+y+y^{23}$~ &
~$156$~
\\ \hline
 
~$[[{\color{red}318,4,\leq 26}]]$~ & ~$1+y^{14}+y^{34}$~ & ~$1+y+y^{38}$~ &
~$159$~
\\ \hline
 
~$[[{\color{red}322,6,\leq 24}]]$~ & ~$1+y^{5}+y^{25}$~ & ~$1+y+y^{24}$~ &
~$161$~
\\ \hline
 
~$[[324,4,\leq 26]]$~ & ~$1+y^{11}+y^{16}$~ & ~$1+y+y^{26}$~ &
~$162$~
\\ \hline

~$[[{\color{red}330,8,\leq 24}]]$~ & ~$1+y^{32}+y^{38}$~ & ~$1+y+y^{49}$~ &
~$165$~
\\ \hline
 
~$[[{\color{red}336,10,\leq 22}]]$~ & ~$1+y^{19}+y^{50}$~ & ~$1+y+y^{5}$~ &
~$168$~
\\ \hline
 
~$[[{\color{red}340,16,18}]]$~ & ~$1+y^{4}+y^{25}$~ & ~$1+y+y^{70}$~ &
~$170$~
\\ \hline
 
~$[[342,4,\leq 26]]$~ & ~$1+y^{16}+y^{23}$~ & ~$1+y+y^{20}$~ &
~$171$~
\\ \hline
 
~$[[{\color{red}348,4,\leq 26}]]$~ & ~$1+y^{16}+y^{23}$~ & ~$1+y+y^{20}$~ &
~$174$~
\\ \hline
 
~$[[{\color{red}350,6,\leq 26}]]$~ & ~$1+y^{4}+y^{33}$~ & ~$1+y+y^{24}$~ &
~$175$~
\\ \hline
 
~$[[{\color{red}354,4,\leq 28}]]$~ & ~$1+y^{19}+y^{29}$~ & ~$1+y+y^{23}$~ &
~$177$~
\\ \hline
 
~$[[360,8,\leq 24]]$~ & ~$1+y^{5}+y^{25}$~ & ~$1+y+y^{27}$~ &
~$180$~
\\ \hline
 
~$[[{\color{red}364,6,\leq 26}]]$~ & ~$1+y^{17}+y^{22}$~ & ~$1+y+y^{24}$~ &
~$182$~
\\ \hline
 
~$[[{\color{red}366,4,\leq 28}]]$~ & ~$1+y^{8}+y^{28}$~ & ~$1+y+y^{26}$~ &
~$183$~
\\ \hline
 
~$[[372,14,20]]$~ & ~$1+y^{26}+y^{34}$~ & ~$1+y+y^{20}$~ &
~$186$~
\\ \hline
 
~$[[{\color{red}378,12,\leq 22}]]$~ & ~$1+y^{4}+y^{37}$~ & ~$1+y+y^{25}$~ &
~$189$~
\\ \hline
 
~$[[384,4,\leq 28]]$~ & ~$1+y^{16}+y^{23}$~ & ~$1+y+y^{26}$~ &
~$192$~
\\ \hline
 
~$[[{\color{red}390,8,\leq 26}]]$~ & ~$1+y^{13}+y^{37}$~ & ~$1+y+y^{42}$~ &
~$195$~
\\ \hline
 
~$[[{\color{red}392,6,\leq 28}]]$~ & ~$1+y^{6}+y^{37}$~ & ~$1+y+y^{24}$~ &
~$196$~
\\ \hline
 
~$[[396,4,\leq 30]]$~ & ~$1+y^{14}+y^{22}$~ & ~$1+y+y^{32}$~ &
~$198$~
\\ \hline
\end{tabular}
\caption{Continuation of Table~\ref{tab: n_k_d 7} for $292 < n \leq 400$.}
\label{tab: n_k_d 8}
\end{table}

\section*{Acknowledgement}

We would like to thank Arpit Dua, Jens Niklas Eberhardt, Jeongwan Haah, Zibo Jin, Zi-Wen Liu, Frank Mueller, Francisco Revson F. Pereira, Ming Wang, and Bowen Yang for their valuable discussions, and Shin Ho Choe and Vincent Steffan for confirming code distances.
This work is supported by the National Natural Science Foundation of China (Grant No.~12474491, No.~12474145, and No.~12447101).
\bigskip

\appendix

\begin{figure*}[t]
    \centering
    \subfigure[\normalsize Boundary identification]{\includegraphics[scale=0.12]{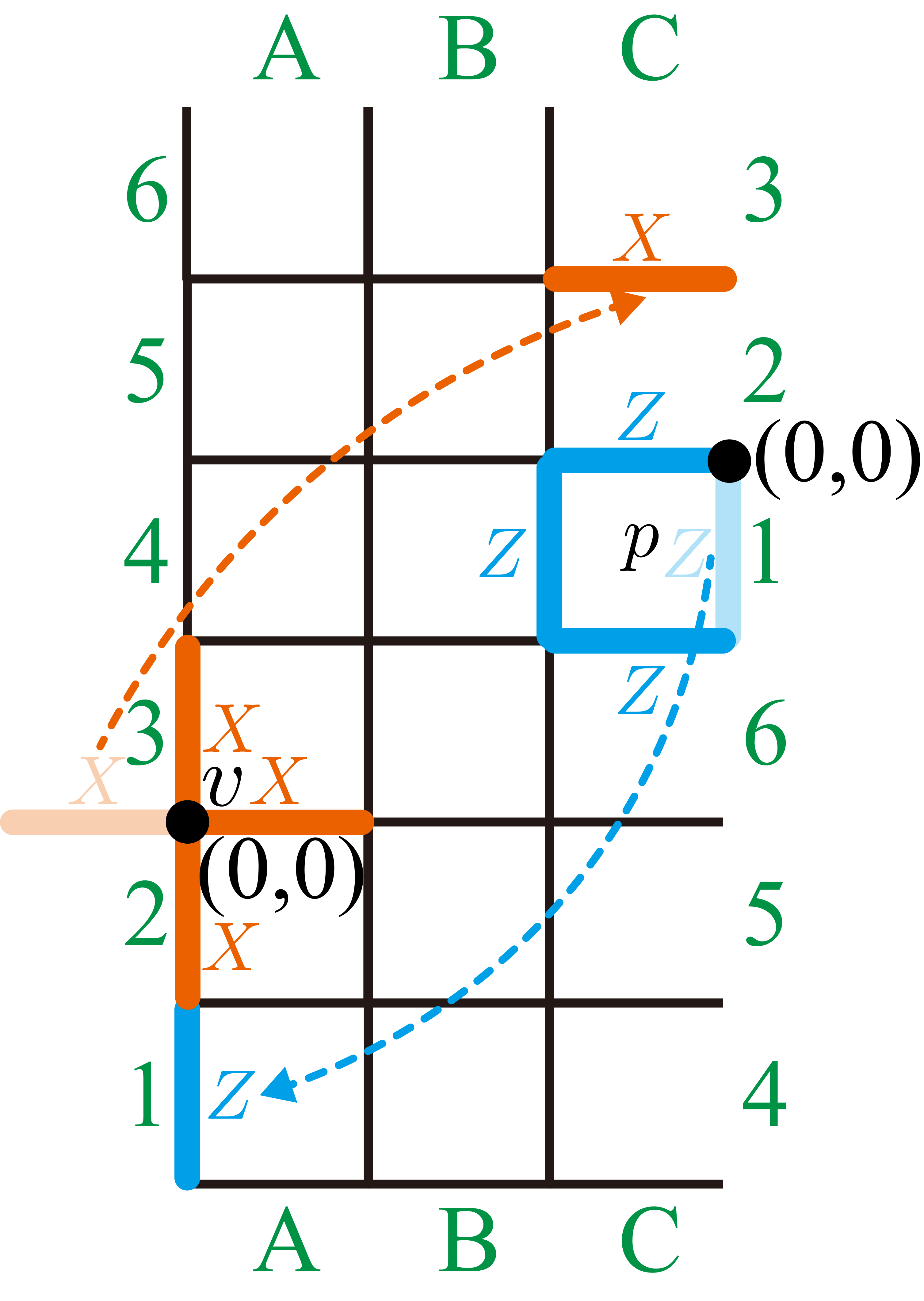}\label{fig: number_convention a}}
    \hspace{1em}
    \subfigure[\normalsize Qubit numbering]{\raisebox{-1.2ex}{\includegraphics[scale=0.12]{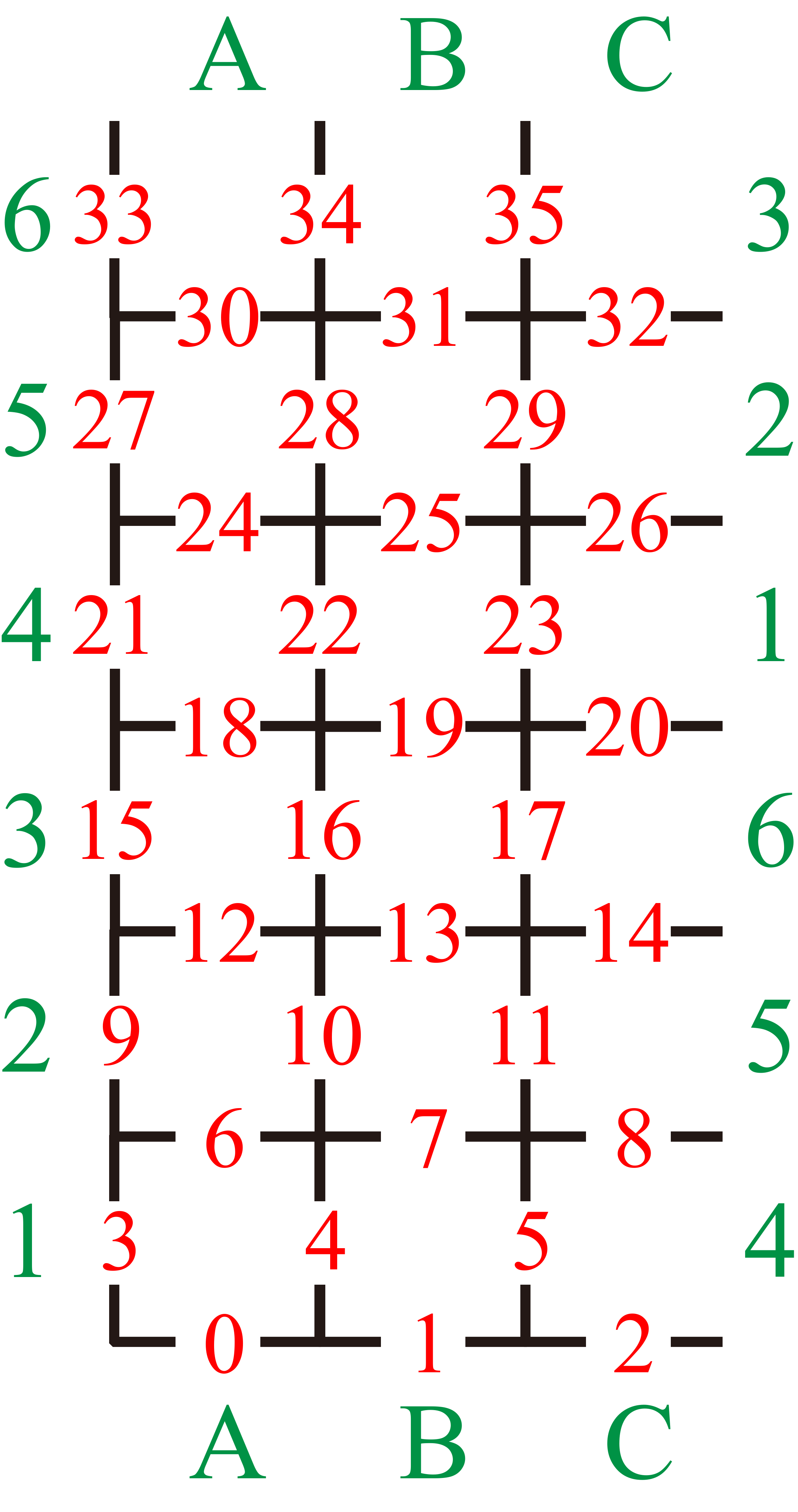}\label{fig: number_convention b}}}
    \hspace{3em}
    \subfigure[\normalsize Rotated Kitaev toric code]{\raisebox{10ex}{\includegraphics[scale=0.11]{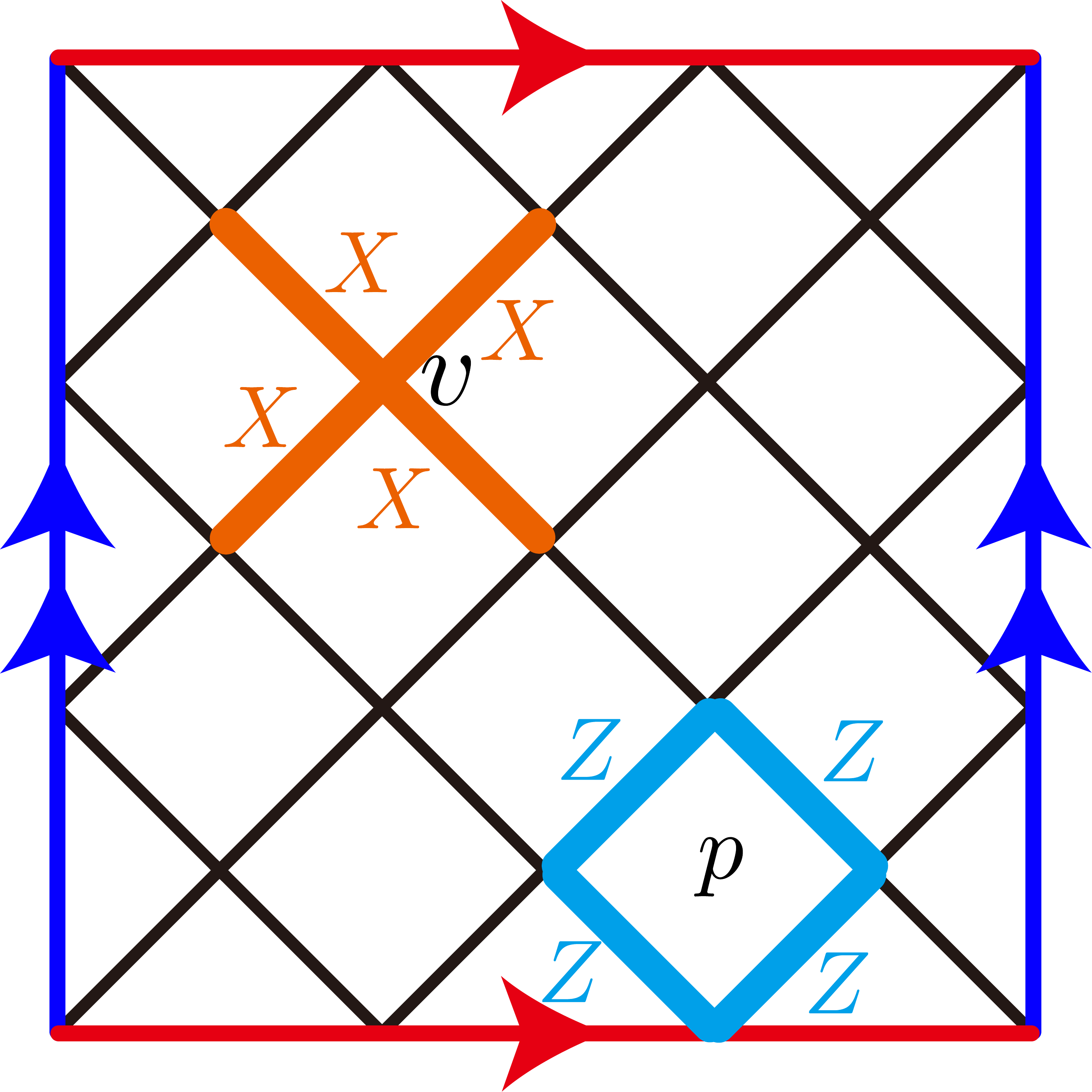}\label{fig: number_convention c}}}
    \caption{\change
    (a) Alternative representation of the twisted torus from Fig.~\ref{fig: twisted torus}, with lattice vectors $\vec{a}_1=(0,\alpha)$ and $\vec{a}_2=(\beta,\gamma)$ (here $\alpha=6$, $\beta=\gamma=3$). Green labels on the left and right edges (and top and bottom) indicate identified boundaries: for example, the horizontal edge in the red $A_v$ terms between labels 2 and 3 on the left reappears at the same location between 2 and 3 on the right, and similarly for the vertical edge in the blue $B_p$ terms at position 1.
    (b) Sequential numbering of qubits from $0$ to $2\alpha\beta-1$.
    (c) This twisted torus is equivalent to the rotated toric code, illustrating that a lattice rotation is a special case of the twisted‐torus construction.
    }
    \label{fig: number_convention}
\end{figure*}


\section{One-dimensional generalized bicycle codes}
\label{sec: 1d GB codes}

As shown in Eq.~\eqref{eq: 254 14 16 GB code}, generalized toric codes on twisted $1 \times \frac{n}{2}$ tori can be reduced to one-dimensional quantum codes, specifically the generalized bicycle code~\cite{Kovalev2013QuantumKronecker, panteleev2021degenerate}. In addition to the optimal generalized toric codes in two dimensions, presented in Tables~\ref{tab: n_k_d 1}, \ref{tab: n_k_d 2}, \ref{tab: n_k_d 3}, and~\ref{tab: n_k_d 4}, we also identify the generalized bicycle (GB) codes in one dimension that are induced from these generalized toric codes. These results are summarized in Tables~\ref{tab: n_k_d 5}, \ref{tab: n_k_d 6}, \ref{tab: n_k_d 7}, and~\ref{tab: n_k_d 8}.

{\change
For the one-dimensional GB codes, the logical dimension can also be obtained from Eq.~\eqref{eq: k of 1d GB codes}, which arises as a special case of Theorem~\ref{thm: k on twisted torus}. Although twisted $1\times\frac{n}{2}$ tori naively appear as long, narrow strips, their two-dimensional realizations often admit unit cells whose aspect ratios are close to one. For example, Eq.~\eqref{eq: 1 17 0 29 from 2 5 7 3} demonstrates that the twisted torus with
$\vec{a}_1=(0,\,29), ~\vec{a}_2=(1,\,17)$
is equivalent to the one with
$\vec{a}_1=(2,\,5), ~\vec{a}_2=(7,\,3)$.

}

\section{Implementation of Pauli operators on twisted tori}
\label{sec: Lattice implementation of twisted tori}

In this appendix, we describe our convention for representing Pauli operators on a twisted torus as binary column vectors. This convention allows us to convert the stabilizers labeled by $f(x,y)$ and $g(x,y)$—defined on a twisted torus with lattice vectors $\vec{a}_1 = (0,\alpha)$ and $\vec{a}_2 = (\beta,\gamma)$—into the $n\times n$ parity-check matrices $H_X$ and $H_Z$, corresponding to the $X$- and $Z$-type stabilizer checks, respectively. Informally, these matrices exhibit a (generalized) ``cyclic'' structure, which we now detail.

To make this structure explicit, we adopt an alternative representation of the twisted torus, shown in Fig.~\ref{fig: number_convention}.\footnote{This lattice representation is equivalent to Fig.~\ref{fig: twisted torus}.} Vertically, the system is translationally invariant along $(0,\alpha)$, so the column labels $A$, $B$, and $C$ at the top match those at the bottom. Horizontally, the lattice vector is $\vec{a}_2=(\beta,\gamma)$, and as shown in Fig.~\ref{fig: number_convention a}, the row labels on the left boundary shift by $+\gamma$ in the $y$-direction when gluing to the right boundary.

This representation facilitates systematic qubit labeling. We number the horizontal edges in the first row, then the vertical edges, then proceed to the second row, again listing the horizontal then vertical edges, and so on. Each qubit is assigned a unique index $i\in\{0,\dots,n-1\}$, where $n=2\alpha\beta$. With this labeling, shifts in the $x$- and $y$-directions are defined as follows:

\begin{enumerate}
    \item \textbf{Shifting in the $y$-direction:} From Fig.~\ref{fig: number_convention b}, moving a qubit $i$ up by one unit always adds $2\beta$. If $i+2\beta\ge n$, we subtract $n$ to keep the label in $[0,n)$. In general, a shift by $l_y$ units yields
    \begin{equation}
        i \xlongrightarrow[y]{l_y} i + (2\beta)\,l_y \pmod{n}.
    \label{eq: shift ly in y}
    \end{equation}
    \item \textbf{Shifting in the $x$-direction:} 
    If $i\not\equiv -1\pmod{\beta}$, moving one unit to the right simply increments the index by 1:
    \begin{equation}
        i \xlongrightarrow[x]{1} i+1.
    \label{eq: shifting x 1}
    \end{equation}
    However, if $i\equiv -1\pmod{\beta}$, the edge exits the lattice, so we apply the lattice vector $(-\beta,-\gamma)$ to return inside:
    \begin{equation}
        i \xlongrightarrow[x]{1} i + 1 - (2\gamma+1)\beta \pmod{n}.
    \label{eq: shifting x 2}
    \end{equation}
    Combining \eqref{eq: shifting x 1} and \eqref{eq: shifting x 2}, a shift by $l_x$ units in the $x$-direction is
    \begin{eqs}
        i \xlongrightarrow[x]{l_x} i + l_x - (2\gamma+1)\beta\bigl(\lfloor\tfrac{i+l_x}{\beta}\rfloor - \lfloor\tfrac{i}{\beta}\rfloor\bigr)\pmod{n},
    \label{eq: shift lx in x}
    \end{eqs}
    where $\lfloor r\rfloor$ denotes the greatest integer less than or equal to $r$.
\end{enumerate}

For example, in a twisted torus with $\vec{a}_1=(0,6)$ and $\vec{a}_2=(3,3)$ shown in Fig.~\ref{fig: number_convention b}, the qubit labeled $8$ shifts two units in the $y$-direction as
\begin{equation}
    8 \xlongrightarrow[y]{2} 8 + 12 = 20 \pmod{36}.
\end{equation}
Shifting qubit $8$ by two units in the $x$-direction gives
\begin{eqs}
    8 \xlongrightarrow[x]{2}& ~8+2-21\Bigl(\lfloor\tfrac{10}{3}\rfloor - \lfloor\tfrac{8}{3}\rfloor\Bigr)\pmod{36} \\
    &= -11 = 25 \pmod{36}.
\end{eqs}

\begin{widetext}

Using these shifts, we can construct the parity-check matrix efficiently. We begin by expressing a stabilizer at the origin as an $n$-dimensional column vector: for an $X$-type stabilizer acting on qubits $i_1, i_2, i_3, \dots$, the vector has ones at positions $i_1, i_2, i_3, \dots$ and zeros elsewhere. This vector is then translated by $(l_x, l_y)$ for $0 \leq l_x < \beta$ and $0 \leq l_y < \alpha$, generating $\alpha\beta = n/2$ columns that form the $n \times n/2$ matrix $H_X$. In the case of the Kitaev toric code (Fig.~\ref{fig: number_convention a}), each $A_v$ stabilizer involves four Pauli operators, so each column contains exactly four ones.
The resulting $36 \times 18$ matrix $H_X$ is given as:
\begin{eqs}
    \renewcommand{\arraystretch}{1.0}
    \begin{array}{C|*{18}{C}}
       & 1 & 2 & 3 & 4 & 5 & 6 & 7 & 8 & 9 & 10 & 11 & 12 & 13 & 14 & 15 & 16 & 17 & 18\\\hline
     0 & \textcolor{blue}{1} & \textcolor{blue}{1} & 0 & 0 & 0 & 0 & 0 & 0 & 0 & 0 & 0 & 0 & 0 & 0 & 0 & 0 & 0 & 0\\
     1 & 0 & \textcolor{blue}{1} & \textcolor{blue}{1} & 0 & 0 & 0 & 0 & 0 & 0 & 0 & 0 & 0 & 0 & 0 & 0 & 0 & 0 & 0\\
     2 & 0 & 0 & \textcolor{blue}{1} & 0 & 0 & 0 & 0 & 0 & 0 & \textcolor{blue}{1} & 0 & 0 & 0 & 0 & 0 & 0 & 0 & 0\\
     3 & \textcolor{blue}{1} & 0 & 0 & \textcolor{blue}{1} & 0 & 0 & 0 & 0 & 0 & 0 & 0 & 0 & 0 & 0 & 0 & 0 & 0 & 0\\
     4 & 0 & \textcolor{blue}{1} & 0 & 0 & \textcolor{blue}{1} & 0 & 0 & 0 & 0 & 0 & 0 & 0 & 0 & 0 & 0 & 0 & 0 & 0\\
     5 & 0 & 0 & \textcolor{blue}{1} & 0 & 0 & \textcolor{blue}{1} & 0 & 0 & 0 & 0 & 0 & 0 & 0 & 0 & 0 & 0 & 0 & 0\\
     6 & 0 & 0 & 0 & \textcolor{blue}{1} & \textcolor{blue}{1} & 0 & 0 & 0 & 0 & 0 & 0 & 0 & 0 & 0 & 0 & 0 & 0 & 0\\
     7 & 0 & 0 & 0 & 0 & \textcolor{blue}{1} & \textcolor{blue}{1} & 0 & 0 & 0 & 0 & 0 & 0 & 0 & 0 & 0 & 0 & 0 & 0\\
     8 & 0 & 0 & 0 & 0 & 0 & \textcolor{blue}{1} & 0 & 0 & 0 & 0 & 0 & 0 & \textcolor{blue}{1} & 0 & 0 & 0 & 0 & 0\\
     9 & 0 & 0 & 0 & \textcolor{blue}{1} & 0 & 0 & \textcolor{blue}{1} & 0 & 0 & 0 & 0 & 0 & 0 & 0 & 0 & 0 & 0 & 0\\
    10 & 0 & 0 & 0 & 0 & \textcolor{blue}{1} & 0 & 0 & \textcolor{blue}{1} & 0 & 0 & 0 & 0 & 0 & 0 & 0 & 0 & 0 & 0\\
    11 & 0 & 0 & 0 & 0 & 0 & \textcolor{blue}{1} & 0 & 0 & \textcolor{blue}{1} & 0 & 0 & 0 & 0 & 0 & 0 & 0 & 0 & 0\\
    12 & 0 & 0 & 0 & 0 & 0 & 0 & \textcolor{blue}{1} & \textcolor{blue}{1} & 0 & 0 & 0 & 0 & 0 & 0 & 0 & 0 & 0 & 0\\
    13 & 0 & 0 & 0 & 0 & 0 & 0 & 0 & \textcolor{blue}{1} & \textcolor{blue}{1} & 0 & 0 & 0 & 0 & 0 & 0 & 0 & 0 & 0\\
    14 & 0 & 0 & 0 & 0 & 0 & 0 & 0 & 0 & \textcolor{blue}{1} & 0 & 0 & 0 & 0 & 0 & 0 & \textcolor{blue}{1} & 0 & 0\\
    15 & 0 & 0 & 0 & 0 & 0 & 0 & \textcolor{blue}{1} & 0 & 0 & \textcolor{blue}{1} & 0 & 0 & 0 & 0 & 0 & 0 & 0 & 0\\
    16 & 0 & 0 & 0 & 0 & 0 & 0 & 0 & \textcolor{blue}{1} & 0 & 0 & \textcolor{blue}{1} & 0 & 0 & 0 & 0 & 0 & 0 & 0\\
    17 & 0 & 0 & 0 & 0 & 0 & 0 & 0 & 0 & \textcolor{blue}{1} & 0 & 0 & \textcolor{blue}{1} & 0 & 0 & 0 & 0 & 0 & 0\\
    18 & 0 & 0 & 0 & 0 & 0 & 0 & 0 & 0 & 0 & \textcolor{blue}{1} & \textcolor{blue}{1} & 0 & 0 & 0 & 0 & 0 & 0 & 0\\
    19 & 0 & 0 & 0 & 0 & 0 & 0 & 0 & 0 & 0 & 0 & \textcolor{blue}{1} & \textcolor{blue}{1} & 0 & 0 & 0 & 0 & 0 & 0\\
    20 & \textcolor{blue}{1} & 0 & 0 & 0 & 0 & 0 & 0 & 0 & 0 & 0 & 0 & \textcolor{blue}{1} & 0 & 0 & 0 & 0 & 0 & 0\\
    21 & 0 & 0 & 0 & 0 & 0 & 0 & 0 & 0 & 0 & \textcolor{blue}{1} & 0 & 0 & \textcolor{blue}{1} & 0 & 0 & 0 & 0 & 0\\
    22 & 0 & 0 & 0 & 0 & 0 & 0 & 0 & 0 & 0 & 0 & \textcolor{blue}{1} & 0 & 0 & \textcolor{blue}{1} & 0 & 0 & 0 & 0\\
    23 & 0 & 0 & 0 & 0 & 0 & 0 & 0 & 0 & 0 & 0 & 0 & \textcolor{blue}{1} & 0 & 0 & \textcolor{blue}{1} & 0 & 0 & 0\\
    24 & 0 & 0 & 0 & 0 & 0 & 0 & 0 & 0 & 0 & 0 & 0 & 0 & \textcolor{blue}{1} & \textcolor{blue}{1} & 0 & 0 & 0 & 0\\
    25 & 0 & 0 & 0 & 0 & 0 & 0 & 0 & 0 & 0 & 0 & 0 & 0 & 0 & \textcolor{blue}{1} & \textcolor{blue}{1} & 0 & 0 & 0\\
    26 & 0 & 0 & 0 & \textcolor{blue}{1} & 0 & 0 & 0 & 0 & 0 & 0 & 0 & 0 & 0 & 0 & \textcolor{blue}{1} & 0 & 0 & 0\\
    27 & 0 & 0 & 0 & 0 & 0 & 0 & 0 & 0 & 0 & 0 & 0 & 0 & \textcolor{blue}{1} & 0 & 0 & \textcolor{blue}{1} & 0 & 0\\
    28 & 0 & 0 & 0 & 0 & 0 & 0 & 0 & 0 & 0 & 0 & 0 & 0 & 0 & \textcolor{blue}{1} & 0 & 0 & \textcolor{blue}{1} & 0\\
    29 & 0 & 0 & 0 & 0 & 0 & 0 & 0 & 0 & 0 & 0 & 0 & 0 & 0 & 0 & \textcolor{blue}{1} & 0 & 0 & \textcolor{blue}{1}\\
    30 & 0 & 0 & 0 & 0 & 0 & 0 & 0 & 0 & 0 & 0 & 0 & 0 & 0 & 0 & 0 & \textcolor{blue}{1} & \textcolor{blue}{1} & 0\\
    31 & 0 & 0 & 0 & 0 & 0 & 0 & 0 & 0 & 0 & 0 & 0 & 0 & 0 & 0 & 0 & 0 & \textcolor{blue}{1} & \textcolor{blue}{1}\\
    32 & 0 & 0 & 0 & 0 & 0 & 0 & \textcolor{blue}{1} & 0 & 0 & 0 & 0 & 0 & 0 & 0 & 0 & 0 & 0 & \textcolor{blue}{1}\\
    33 & \textcolor{blue}{1} & 0 & 0 & 0 & 0 & 0 & 0 & 0 & 0 & 0 & 0 & 0 & 0 & 0 & 0 & \textcolor{blue}{1} & 0 & 0\\
    34 & 0 & \textcolor{blue}{1} & 0 & 0 & 0 & 0 & 0 & 0 & 0 & 0 & 0 & 0 & 0 & 0 & 0 & 0 & \textcolor{blue}{1} & 0\\
    35 & 0 & 0 & \textcolor{blue}{1} & 0 & 0 & 0 & 0 & 0 & 0 & 0 & 0 & 0 & 0 & 0 & 0 & 0 & 0 & \textcolor{blue}{1}\\
    \end{array}
    \nonumber
\end{eqs}
where the rows and columns of the matrix correspond to qubit indices and shifted $A_v$ stabilizers, respectively. For example, the $A_v$ stabilizer at the origin in Fig.~\ref{fig: number_convention a} applies Pauli $X$ to qubits $9$, $12$, $15$, and $32$, which corresponds to the $7$th column of the parity-check matrix above. On a lattice with 36 qubits, there are 18 distinct $A_v$ stabilizers; however, their product yields the identity operator, rendering one stabilizer redundant and giving $\operatorname{rank}(H_X) = 17$.
The shifts defined in Eqs.~\eqref{eq: shift ly in y} and \eqref{eq: shift lx in x} generalize the ``cyclic structure'' of parity-check matrices in hypergraph product codes~\cite{tillich2013quantum, kovalev2012improved}.

The $36 \times 18$ parity-check matrix $H_Z$ associated with the $B_p$ stabilizers is constructed analogously, following the same procedure outlined above:
\begin{eqs}
    \renewcommand{\arraystretch}{1.0}
    \begin{array}{C|*{18}{C}}        
       & 1 & 2 & 3 & 4 & 5 & 6 & 7 & 8 & 9 & 10 & 11 & 12 & 13 & 14 & 15 & 16 & 17 & 18\\\hline
     0 & \textcolor{blue}{1} & 0 & 0 & 0 & 0 & 0 & 0 & 0 & 0 & 0 & 0 & 0 & 0 & 0 & 0 & \textcolor{blue}{1} & 0 & 0\\
     1 & 0 & \textcolor{blue}{1} & 0 & 0 & 0 & 0 & 0 & 0 & 0 & 0 & 0 & 0 & 0 & 0 & 0 & 0 & \textcolor{blue}{1} & 0\\
     2 & 0 & 0 & \textcolor{blue}{1} & 0 & 0 & 0 & 0 & 0 & 0 & 0 & 0 & 0 & 0 & 0 & 0 & 0 & 0 & \textcolor{blue}{1}\\
     3 & \textcolor{blue}{1} & 0 & 0 & 0 & 0 & 0 & 0 & 0 & 0 & 0 & 0 & \textcolor{blue}{1} & 0 & 0 & 0 & 0 & 0 & 0\\
     4 & \textcolor{blue}{1} & \textcolor{blue}{1} & 0 & 0 & 0 & 0 & 0 & 0 & 0 & 0 & 0 & 0 & 0 & 0 & 0 & 0 & 0 & 0\\
     5 & 0 & \textcolor{blue}{1} & \textcolor{blue}{1} & 0 & 0 & 0 & 0 & 0 & 0 & 0 & 0 & 0 & 0 & 0 & 0 & 0 & 0 & 0\\
     6 & \textcolor{blue}{1} & 0 & 0 & \textcolor{blue}{1} & 0 & 0 & 0 & 0 & 0 & 0 & 0 & 0 & 0 & 0 & 0 & 0 & 0 & 0\\
     7 & 0 & \textcolor{blue}{1} & 0 & 0 & \textcolor{blue}{1} & 0 & 0 & 0 & 0 & 0 & 0 & 0 & 0 & 0 & 0 & 0 & 0 & 0\\
     8 & 0 & 0 & \textcolor{blue}{1} & 0 & 0 & \textcolor{blue}{1} & 0 & 0 & 0 & 0 & 0 & 0 & 0 & 0 & 0 & 0 & 0 & 0\\
    9 & 0 & 0 & 0 & \textcolor{blue}{1} & 0 & 0 & 0 & 0 & 0 & 0 & 0 & 0 & 0 & 0 & \textcolor{blue}{1} & 0 & 0 & 0\\
    10 & 0 & 0 & 0 & \textcolor{blue}{1} & \textcolor{blue}{1} & 0 & 0 & 0 & 0 & 0 & 0 & 0 & 0 & 0 & 0 & 0 & 0 & 0\\
    11 & 0 & 0 & 0 & 0 & \textcolor{blue}{1} & \textcolor{blue}{1} & 0 & 0 & 0 & 0 & 0 & 0 & 0 & 0 & 0 & 0 & 0 & 0\\
    12 & 0 & 0 & 0 & \textcolor{blue}{1} & 0 & 0 & \textcolor{blue}{1} & 0 & 0 & 0 & 0 & 0 & 0 & 0 & 0 & 0 & 0 & 0\\
    13 & 0 & 0 & 0 & 0 & \textcolor{blue}{1} & 0 & 0 & \textcolor{blue}{1} & 0 & 0 & 0 & 0 & 0 & 0 & 0 & 0 & 0 & 0\\
    14 & 0 & 0 & 0 & 0 & 0 & \textcolor{blue}{1} & 0 & 0 & \textcolor{blue}{1} & 0 & 0 & 0 & 0 & 0 & 0 & 0 & 0 & 0\\
    15 & 0 & 0 & 0 & 0 & 0 & 0 & \textcolor{blue}{1} & 0 & 0 & 0 & 0 & 0 & 0 & 0 & 0 & 0 & 0 & \textcolor{blue}{1}\\
    16 & 0 & 0 & 0 & 0 & 0 & 0 & \textcolor{blue}{1} & \textcolor{blue}{1} & 0 & 0 & 0 & 0 & 0 & 0 & 0 & 0 & 0 & 0\\
    17 & 0 & 0 & 0 & 0 & 0 & 0 & 0 & \textcolor{blue}{1} & \textcolor{blue}{1} & 0 & 0 & 0 & 0 & 0 & 0 & 0 & 0 & 0\\
    18 & 0 & 0 & 0 & 0 & 0 & 0 & \textcolor{blue}{1} & 0 & 0 & \textcolor{blue}{1} & 0 & 0 & 0 & 0 & 0 & 0 & 0 & 0\\
    19 & 0 & 0 & 0 & 0 & 0 & 0 & 0 & \textcolor{blue}{1} & 0 & 0 & \textcolor{blue}{1} & 0 & 0 & 0 & 0 & 0 & 0 & 0\\
    20 & 0 & 0 & 0 & 0 & 0 & 0 & 0 & 0 & \textcolor{blue}{1} & 0 & 0 & \textcolor{blue}{1} & 0 & 0 & 0 & 0 & 0 & 0\\
    21 & 0 & 0 & \textcolor{blue}{1} & 0 & 0 & 0 & 0 & 0 & 0 & \textcolor{blue}{1} & 0 & 0 & 0 & 0 & 0 & 0 & 0 & 0\\
    22 & 0 & 0 & 0 & 0 & 0 & 0 & 0 & 0 & 0 & \textcolor{blue}{1} & \textcolor{blue}{1} & 0 & 0 & 0 & 0 & 0 & 0 & 0\\
    23 & 0 & 0 & 0 & 0 & 0 & 0 & 0 & 0 & 0 & 0 & \textcolor{blue}{1} & \textcolor{blue}{1} & 0 & 0 & 0 & 0 & 0 & 0\\
    24 & 0 & 0 & 0 & 0 & 0 & 0 & 0 & 0 & 0 & \textcolor{blue}{1} & 0 & 0 & \textcolor{blue}{1} & 0 & 0 & 0 & 0 & 0\\
    25 & 0 & 0 & 0 & 0 & 0 & 0 & 0 & 0 & 0 & 0 & \textcolor{blue}{1} & 0 & 0 & \textcolor{blue}{1} & 0 & 0 & 0 & 0\\
    26 & 0 & 0 & 0 & 0 & 0 & 0 & 0 & 0 & 0 & 0 & 0 & \textcolor{blue}{1} & 0 & 0 & \textcolor{blue}{1} & 0 & 0 & 0\\
    27 & 0 & 0 & 0 & 0 & 0 & \textcolor{blue}{1} & 0 & 0 & 0 & 0 & 0 & 0 & \textcolor{blue}{1} & 0 & 0 & 0 & 0 & 0\\
    28 & 0 & 0 & 0 & 0 & 0 & 0 & 0 & 0 & 0 & 0 & 0 & 0 & \textcolor{blue}{1} & \textcolor{blue}{1} & 0 & 0 & 0 & 0\\
    29 & 0 & 0 & 0 & 0 & 0 & 0 & 0 & 0 & 0 & 0 & 0 & 0 & 0 & \textcolor{blue}{1} & \textcolor{blue}{1} & 0 & 0 & 0\\
    30 & 0 & 0 & 0 & 0 & 0 & 0 & 0 & 0 & 0 & 0 & 0 & 0 & \textcolor{blue}{1} & 0 & 0 & \textcolor{blue}{1} & 0 & 0\\
    31 & 0 & 0 & 0 & 0 & 0 & 0 & 0 & 0 & 0 & 0 & 0 & 0 & 0 & \textcolor{blue}{1} & 0 & 0 & \textcolor{blue}{1} & 0\\
    32 & 0 & 0 & 0 & 0 & 0 & 0 & 0 & 0 & 0 & 0 & 0 & 0 & 0 & 0 & \textcolor{blue}{1} & 0 & 0 & \textcolor{blue}{1}\\
    33 & 0 & 0 & 0 & 0 & 0 & 0 & 0 & 0 & \textcolor{blue}{1} & 0 & 0 & 0 & 0 & 0 & 0 & \textcolor{blue}{1} & 0 & 0\\
    34 & 0 & 0 & 0 & 0 & 0 & 0 & 0 & 0 & 0 & 0 & 0 & 0 & 0 & 0 & 0 & \textcolor{blue}{1} & \textcolor{blue}{1} & 0\\
    35 & 0 & 0 & 0 & 0 & 0 & 0 & 0 & 0 & 0 & 0 & 0 & 0 & 0 & 0 & 0 & 0 & \textcolor{blue}{1} & \textcolor{blue}{1}\\
    \end{array}
    \nonumber
\end{eqs}
where the entries of $H_Z$ are defined such that a value of $1$ indicates the qubit positions on which Pauli $Z$ acts. For example, the $B_p$ stabilizer shown in Fig.~\ref{fig: number_convention a} applies Pauli $Z$ to qubits $3$, $20$, $23$, and $26$, corresponding to the $12$th column of $H_Z$. Since the product of all $B_p$ stabilizers is the identity, one of them is redundant, leading to $\operatorname{rank}(H_Z) = 17$.

Therefore, the logical dimension of the Kitaev toric code is
\begin{eqs}
    k &= n - \operatorname{rank}(H_X) - \operatorname{rank}(H_Z) \\
      &= 36 - 17 - 17 = 2,
\end{eqs}
in agreement with Example~\ref{example: Kitaev toric code}.


    
\newpage
\section{Algorithm}

In this appendix, we present the pseudocode for computing the logical dimension $k$, as stated in Theorem~\ref{thm: k on twisted torus} of the main text.

\begin{algorithm}[H]
    \caption{Deriving the logical dimension $k$}
    \begin{algorithmic}[1]
        \REQUIRE Stabilizer polynomials $\mathcal{S}$.
        \ENSURE The logical dimension $k$ computed via the Gr\"obner basis.

        \STATE Consider the stabilizer code defined by:
        \begin{equation}
            A_v = 
            \begin{bmatrix}
                f(x,y) \\
                \rule{0pt}{1.1em}g(x,y) \\
                \hline
                0 \\
                0
            \end{bmatrix}, 
            \quad
            B_p = 
            \begin{bmatrix}
                0 \\
                0 \\
                \hline
                \rule{0pt}{1.1em}\overline{g(x,y)} \\
                \overline{f(x,y)}
            \end{bmatrix},
        \end{equation}
        defined on a twisted torus with lattice vectors $\vec{a}_1 = (0, \alpha)$ and $\vec{a}_2 = (\beta, \gamma)$.

        \STATE Compute the Gr\"obner basis $\langle h(x,y), i(x,y), j(x,y) \rangle$:
        \begin{equation}
            \langle f(x,y), g(x,y), y^{\alpha} - 1, x^\beta y^{\gamma} - 1 \rangle = \langle h(x,y), i(x,y), j(x,y) \rangle,
        \end{equation}
        using the \texttt{groebner} function from the Python package \texttt{sympy}.

        \IF{$\langle h(x,y), i(x,y), j(x,y) \rangle = \langle 1 \rangle$}
            \RETURN $k=0$
        \ELSE
        \STATE Calculate the number of independent monomials:
        \STATE Determine the maximal degrees $n_x$ and $n_y$ of the variables $x$ and $y$ in the polynomials $h(x,y)$, $i(x,y)$, and $j(x,y)$.
        \STATE Construct matrix $\wwide{M}_1$ via the translation duplication map $\mathrm{TD}_{m_x,m_y}$ defined in Ref.~\cite{liang2023extracting}:
        \begin{equation}
            \wwide{M}_1 \leftarrow 
            \begin{bmatrix}
                \wwide{\mathrm{TD}_{\frac{n_x}{2},\frac{n_y}{2}}(1)} \\
                \wwide{\mathrm{TD}_{\frac{3n_x}{2},\frac{3n_y}{2}}(h(x,y))} \\
                \wwide{\mathrm{TD}_{\frac{3n_x}{2},\frac{3n_y}{2}}(i(x,y))} \\
                \wwide{\mathrm{TD}_{\frac{3n_x}{2},\frac{3n_y}{2}}(j(x,y))}
            \end{bmatrix}.
        \end{equation}

        \STATE Construct matrix $\wwide{M}_2$ via the translation duplication map:
        \begin{equation}
            \wwide{M}_2 \leftarrow 
            \begin{bmatrix}
                \wwide{\mathrm{TD}_{\frac{3n_x}{2},\frac{3n_y}{2}}(h(x,y))} \\
                \wwide{\mathrm{TD}_{\frac{3n_x}{2},\frac{3n_y}{2}}(i(x,y))} \\
                \wwide{\mathrm{TD}_{\frac{3n_x}{2},\frac{3n_y}{2}}(j(x,y))}
            \end{bmatrix}.
        \end{equation}

        \RETURN
        \begin{equation}
            k = 2 \times \left(\mathrm{rank}(\wwide{M}_1) - \mathrm{rank}(\wwide{M}_2)\right)
        \end{equation}

        \ENDIF

    \end{algorithmic}
\end{algorithm}

\end{widetext}

\bibliography{bib.bib}

\end{document}